\newlist{abbrv}{itemize}{1}
\setlist[abbrv,1]{label=,labelwidth=1.2in,align=parleft,itemsep=0.1\baselineskip,leftmargin=!}
\newcommand{\tr}{\textnormal{Tr}}
\newcommand{\vA}{\boldsymbol{\hat{A}}_{\kappa}}
\newcommand{\vAc}{\boldsymbol{A}_{\kappa}}
\newcommand{\vE}{\boldsymbol{\hat{E}}_{\kappa}}
\newcommand{\vEc}{\boldsymbol{E}_{\kappa}}
\newcommand{\vF}{\boldsymbol{\hat{F}}}
\newcommand{\vFp}{\boldsymbol{\hat{F}}^{+}}
\newcommand{\vFm}{\boldsymbol{\hat{F}}^{-}}
\newcommand{\vFc}{\boldsymbol{F}}
\newcommand{\vFpc}{\boldsymbol{F}^{+}}
\newcommand{\vFmc}{\boldsymbol{F}^{-}}
\newcommand{\vep}{\boldsymbol{\epsilon}}
\newcommand{\abs}[1]{| #1 |}
\newcommand{\vj}{\boldsymbol{j}}
\newcommand{\ND}{\mathcal{N}_{\delta}}
\newcommand{\scp}[2]{\big\langle #1 , #2 \big\rangle}
\newcommand{\SCP}[2]{\big\langle #1 , #2 \big\rangle}
\newcommand{\bra}[1]{\langle #1 |}
\newcommand{\ket}[1]{| #1 \rangle}
\newcommand{\norm}[1]{\left\| #1 \right\|}
\renewcommand{\Re}{\mathrm{Re}}
\renewcommand{\Im}{\mathrm{Im}}
\newcommand{\id}{\mathbbm{1}}
\newcommand{\be}{\begin{equation}}
\newcommand{\ee}{\end{equation}}
\newtheorem{theorem}{Theorem}[section]
\newtheorem{lemma}[theorem]{Lemma}
\newtheorem{corollary}[theorem]  {Corollary}
\newtheorem{remark}[theorem]  {Remark}
\newtheorem{definition}[theorem] {Definition}
\newtheorem{proposition}[theorem]{Proposition}
\newtheorem{assumption}[theorem]{Assumption}
\numberwithin{equation}{section}
\begin{document}

\title{Derivation of the Maxwell-Schr\"odinger Equations: A note on the infrared sector of the radiation field}

\author{ Marco Falconi\footnote{Dipartimento di Matematica, Politecnico di Milano, Piazza Leonardo da Vinci, 32, 20133, Milano, Italy, E-mail address: {\tt  marco.falconi@polimi.it}} \ and
Nikolai Leopold\footnote{University of Basel, Department of Mathematics and Computer Science, Spiegelgasse 1, 4051 Basel, Switzerland, E-mail address: {\tt nikolai.leopold@unibas.ch}} }

\maketitle

\begin{abstract}
\noindent
We slightly extend prior results about the derivation of the Maxwell-Schr\"odinger equations from the bosonic Pauli-Fierz Hamiltonian. More concretely, we show  that the findings from \cite{LP2020} about the coherence of the quantized electromagnetic field also hold for soft photons with small energies. This is achieved with the help of an estimate from \cite{AFH2022} which proves that the domain of the number of photon operator is invariant during the time evolution generated by the Pauli-Fierz Hamiltonian.
\end{abstract}

\noindent
\textbf{MSC class:} 35Q40, 81Q05, 81V10, 82C10   \\
\textbf{Keywords:} mean-field limit, Pauli-Fierz Hamiltonian, Maxwell-Schr\"odinger equations

\section{Introduction}

In this short paper we derive the Maxwell-Schr\"odinger system of equations as
an effective model describing a Bose-Einstein condensate of charged particles
immersed in a coherent electromagnetic field. More precisely, we prove quantitatively
that the Maxwell-Schr\"odinger system approximates well the many-body quantum evolution generated by the Pauli-Fierz Hamiltonian;
provided that the total number of particles $N$ is large, the particles are initially in a Bose-Einstein condensate and that the
quantum nature of the field -- quantified by the semiclassical parameter $\hslash$ --
is negligible. In particular, we focus on the combined regime $N \sim
\frac{1}{\hslash} \to +\infty$. Equivalently, the same effective dynamics approximates
well $N\to \infty$ bosons weakly interacting with a quantized electromagnetic field,
see the discussion below.

This problem has already been studied by one of the authors, together with
P.\ Pickl, in \cite{LP2020}. 
The main focus here is to build on the
results and techniques introduced there, and to strengthen them by studying
convergence for the photons' reduced density matrix. 
In the previous work the quantum fluctuations around the coherent state of photons have been classified only by means of their energy. 
The extension to the reduced density matrix is physically relevant and mathematically nontrivial because the coherence of photons with small frequencies can not be shown by the energy of the electromagnetic field, due to its massless nature. We often refer to \cite{LP2020} throughout the paper, hopefully striking a good balance between being concise and being
self-contained.

\subsection{The Maxwell-Schr\"odinger System of Equations}
\label{sec:maxw-schr-syst}

The Maxwell-Schr\"odinger system of equations describes the wave function $\varphi$ of
a quantum particle (with a nontrivial charge distribution $\kappa$) interacting
with the classical electromagnetic field, described by the vector potential
$\mathbf{A}$ and the electric field $\mathbf{E} = - \dot{\mathbf{A}}$. We choose the Coulomb gauge
\begin{equation}
  \nabla\cdot \mathbf{A}=0 
\end{equation}
and this makes indeed $\mathbf{A}$ and $\mathbf{E}$ the only dynamical
degrees of freedom of the field. Let us also preliminarily define the current
\begin{equation}
  \label{eq: current}
  \mathbf{j}= 2\bigl(\Im (\bar{\varphi}\nabla\varphi) - \bar{\varphi}(\kappa*\mathbf{A})\varphi\bigr) \; .
\end{equation}
The Maxwell-Schr\"odinger system thus takes the form
\begin{align}
\label{eq:Hartree-Maxwell 2}
\begin{cases}
  &i \partial_t \varphi = \bigl(-i\nabla - (\kappa*\mathbf{A})\bigr)^2\varphi + \mathcal{V}[\varphi] \\[3mm]
  &\partial_t \mathbf{A} = - \mathbf{E}\\[3mm]
  &\partial_t \mathbf{E} = - \Delta \mathbf{A} -\bigl(1-\nabla (\nabla\cdot (\Delta^{-1})\bigr) (\kappa* \mathbf{j})
\end{cases}\quad ,
\end{align}
where $\mathcal{V}[\varphi]$ is an interaction term for the quantum particle. The
choice of gauge, Coulomb's in this case, can be seen as a constraint, for it
is preserved by the Maxwell-Schr\"odinger flow. A typical example for the
particle interaction $\mathcal{V}[\varphi]$ could be
\begin{equation}
  \label{eq: effective particle potential}
  \mathcal{V}[\varphi] = \bigl( W + v* \lvert \varphi  \rvert_{}^{2}\bigr)\varphi\; ,
\end{equation}
where $W$ is an external potential and $(v* \lvert \varphi \rvert_{}^2)\varphi$ a nonlinear term,
usually originating from a microscopic pair interaction. The Cauchy problem
associated to \eqref{eq:Hartree-Maxwell 2} is obtained by fixing an initial datum $(\varphi_0,\mathbf{A}_0,\mathbf{E}_0)$, subjected to the
constraint $\nabla\cdot \mathbf{A}_0=0$. In order to do so, it is convenient to
introduce the complex scalar fields $\bigl(\alpha_0(\cdot ,\lambda)\bigr)_{\lambda=1,2}$ by
defining
\begin{align}
\label{eq: complex-real fields}
&\mathbf{A}_0(x) =  \tfrac{1}{(2\pi)^{3/2}}   \sum_{\lambda=1,2} \int \mathrm{d}^3k \, \frac{1}{\sqrt{2 \abs{k}}} \vep_{\lambda}(k) 
\left( e^{ikx} \alpha_0(k,\lambda) + e^{-ikx} \overline{\alpha_0(k,\lambda)}  \right)\;, \\
& \mathbf{E}_0(x)=  \tfrac{i}{(2\pi)^{3/2}}   \sum_{\lambda=1,2} \int \mathrm{d}^3k  \,  \sqrt{\frac{\abs{k}}{2}} \vep_{\lambda}(k) 
\left( e^{ikx} \alpha_0(k,\lambda) - e^{-ikx}  \overline{\alpha_0(k,\lambda)}  \right)  \; ,
\end{align}
where $\bigl(\vep_{\lambda}(k)\bigr)_{\lambda=1,2}$ are the polarization vectors
satisfying
\begin{equation}
  \vep_{\lambda}(k)\cdot \vep_{\mu}(k) = \delta_{\lambda\mu}\;,\; k\cdot \vep_{\lambda}(k)=0\;,
\end{equation}
that implement the Coulomb gauge. In fact, there is a unique such
decomposition for any time, \textit{i.e.},
\begin{align}
\label{eq: complex-real fields time 1}
&\mathbf{A}(x,t) =  \tfrac{1}{(2\pi)^{3/2}}   \sum_{\lambda=1,2} \int \mathrm{d}^3k \, \frac{1}{\sqrt{2 \abs{k}}} \vep_{\lambda}(k) 
\left( e^{ikx} \alpha_t(k,\lambda) + e^{-ikx} \overline{\alpha_t(k,\lambda)}  \right)\;, \\
\label{eq: complex-real fields time 2}
& \mathbf{E}(x,t)=  \tfrac{i}{(2\pi)^{3/2}}   \sum_{\lambda=1,2} \int \mathrm{d}^3k  \,  \sqrt{\frac{\abs{k}}{2}} \vep_{\lambda}(k) 
\left( e^{ikx} \alpha_t(k,\lambda) - e^{-ikx}  \overline{\alpha_t(k,\lambda)}  \right)  \; ,
\end{align}
that respects both the Coulomb gauge and $\dot{\mathbf{A}}= - \mathbf{E}$. This makes it possible to consider the equivalent system
\begin{align}
\label{eq:Hatree-Maxwell}
\begin{cases}
i \partial_t \varphi_t &=  \left(-i \nabla - \kappa * \boldsymbol{A}(\cdot,t) \right)^2 \varphi_t + \mathcal{V}[\varphi] ,    \\
i \partial_t \alpha_t(k,\lambda)
&= \abs{k} \alpha_t(k,\lambda) - \sqrt{\frac{4 \pi^3}{\abs{k}}}  \mathcal{F}[\kappa](k) \vep_{\lambda}(k) \cdot \mathcal{F}[\vj_t](k),    \\
\boldsymbol{A}(x,t) &=  (2 \pi)^{-3/2}   \sum_{\lambda=1,2} \int d^3k \, \frac{1}{\sqrt{2 \abs{k}}} \vep_{\lambda}(k) 
\left( e^{ikx} \alpha_t(k,\lambda) + e^{-ikx} \overline{\alpha_t(k,\lambda)}  \right) ,
\\
\vj_t &= 2 \left(  \Im(\varphi_t^* \nabla \varphi_t) - \abs{\varphi_t}^2  \kappa * \boldsymbol{A}(\cdot,t) \right)
\end{cases}
\end{align}
with initial datum $(\varphi_0,\alpha_0(\cdot,1), \alpha_0(\cdot,2))$.
As it will be clarified shortly, the latter appear naturally as the effective counterparts of the microscopic dynamical variables.
Note that the energy functional of the Maxwell-Schr\"odinger system is given by
\begin{align}
\label{eq: Pauli energy functional of the HM system}
\mathcal{E}_M\left[\varphi, \alpha \right]
&\coloneqq \norm{\left(- i \nabla -  (\kappa * \boldsymbol{A}) \right) \varphi}^2 + \frac{1}{2} \scp{\varphi}{\left( v * \abs{\varphi}^2 \right) \varphi}  
+ \sum_{\lambda=1,2} \int d^3 k \abs{k} \abs{\alpha(k,\lambda)}^2  
\end{align}
with $\boldsymbol{A}$ being defined in analogy to \eqref{eq: complex-real fields}.
Global well-posedness for the Maxwell-Schr\"odinger system with
$\mathcal{V}[\varphi ]= v* \lvert \varphi \rvert_{}^{2}\varphi$, $\kappa(x)=\mathrm{e} \,\delta(x)$ (where
$\mathrm{e}$ is the electric charge of the Schr\"odinger particle), and
$v(x)=\frac{\mathrm{e}^2}{\lvert x \rvert_{}^{}}$ has been proven in
\cite{bejenaru,nakamurawada}. We will also consider only the case
$\mathcal{V}[\varphi]= v* \lvert \varphi \rvert_{}^{2}\varphi$, but we may require the charge
distribution $\kappa$ to be extended, in order to well-define the microscopic
system, as discussed below. Typical examples of charge distributions that we
will consider are of the form
\begin{equation}
  \kappa(x)=  \frac{\mathrm{e}}{\mathrm{\sigma}^3 (2\pi)^{3/2}}e^{-\frac{x^2}{2\mathrm{\sigma}^2}}\;,
\end{equation}
representing a charged particle with total charge $\mathrm{e}\in \mathbb{R}$,
distributed in a Gaussian fashion (``smoothed'' spherical distribution of
``diameter'' $\sigma$); or
\begin{equation}
\label{eq:definition charge distribution with sharp cutoff}
  \kappa(x)= \mathrm{e}\frac{\mathcal{F}[\id_{\lvert \cdot   \rvert_{}^{}\leq \Lambda}](x)}{(2\pi)^{3/2}}\; ,
\end{equation}
where $\mathcal{F}[\,\cdot \,]$ stands for the Fourier transform, representing a
sharp cutoff in momentum space, with total charge $\mathrm{e}\in \mathbb{R}$. Let us
remark that globally neutral particles can be considered, as long as they
have a nontrivial charge distribution: for example,
\begin{equation} 
  \kappa(x)= 
  \begin{cases}
    \frac{e^{-\frac{x^2}{2}}}{(2\pi)^{3/2}} & \text{if } x=(x_1,x_2,x_3)\;,\; x_1\geq 0\\
    -\frac{e^{-\frac{x^2}{2}}}{(2\pi)^{3/2}} & \text{if } x=(x_1,x_2,x_3)\;,\; x_1< 0
  \end{cases}\; ,
\end{equation}
yields null total charge but nontrivial dipole, quadrupole, etc.\
interactions with the electromagnetic field.

If the charge distribution is not concentrated in a single point, and the
potential $v$ represents an electrostatic mean-field self-interaction, then
the form of the latter changes as well: a physically sensible choice would be
$v=\kappa*\frac{1}{\lvert\, \cdot \,\rvert_{}^{}}*\kappa$. We will allow some liberty in the choices of
$\kappa$ and $v$; the specific requirements on the two will be made precise in Assumption \ref{assumption:potential and charge distribution} below. Concerning global well-posedness, let us remark that
compared to the literature \cite{bejenaru,nakamurawada} our choices for $\kappa$ and $v$ will be, at most, ``better'' (\textit{i.e.}, more regular) and
therefore do not affect the proof in any way since both $\kappa$ and $v$ act by
convolution in the equation.
\begin{proposition}[\cite{bejenaru}]
 \label{prop:1}
 The Maxwell-Schr\"odinger system in Coulomb's gauge \eqref{eq:Hartree-Maxwell 2} (with variables $(\varphi,\mathbf{A},\mathbf{E})$) is
  globally well-posed in $H^1\times H^1\times L^2$. More precisely,
  \begin{enumerate}
  \item\label{item:1} (Regular solutions -- \cite{nakamurawada}) For every
    \begin{equation*}
      (\varphi_0,\mathbf{A}_0,\mathbf{E}_0)\in H^2\times H^2\times H^1\;,
    \end{equation*}
    there exists a unique global solution
    \begin{equation*}
      (\varphi,\mathbf{A})\in \big( \mathscr{C}^0(\mathbb{R},H^2)\cap \mathscr{C}^1(\mathbb{R},L^2) \big) \times \big( \mathscr{C}^0(\mathbb{R},H^2)\cap \mathscr{C}^1(\mathbb{R},H^1)\cap \mathscr{C}^2(\mathbb{R},L^2) \big)
    \end{equation*}
    of the Cauchy problem associated to \eqref{eq:Hartree-Maxwell 2}.
    
  \item\label{item:2} (Rough solutions) For every
    \begin{equation*}
      (\varphi_0,\mathbf{A}_0,\mathbf{E}_0)\in H^1\times H^1\times L^2\;,
    \end{equation*}
    there exists a unique global solution
    \begin{equation*}
      (\varphi,\mathbf{A})\in \mathscr{C}^0(\mathbb{R},H^1)\times \big( \mathscr{C}^0(\mathbb{R},H^1)\cap \mathscr{C}^1(\mathbb{R},L^2) \big)
    \end{equation*}
 of the Cauchy problem associated to \eqref{eq:Hartree-Maxwell 2}, being the unique strong limit of a sequence of regular solutions in \eqref{item:1}, whose initial data approximate the rough initial datum in $H^1\times H^1\times L^2$.
\item\label{item:3} (Continuous dependence on initial data) The solutions $(\varphi,\mathbf{A})$ in \eqref{item:2} depend continuously on the initial datum $(\varphi_0,\mathbf{A}_0,\mathbf{E}_0)\in H^1\times H^1\times L^2$.
  \end{enumerate}
\end{proposition}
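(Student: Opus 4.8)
The plan is to treat this statement as a restatement of the global well-posedness theory of Bejenaru--Tataru \cite{bejenaru} and Nakamura--Wada \cite{nakamurawada}, adapted to the more regular couplings admitted here, and to organise it along the classical scheme: local well-posedness by contraction, globalisation by the conservation laws, and passage to rough data by approximation. The decisive structural simplification compared with the physical point-charge case is that $\kappa$ enters everywhere through convolution. For the admissible $\kappa$ (Gaussian, or the sharp momentum cutoff \eqref{eq:definition charge distribution with sharp cutoff}) the multiplier $\mathcal{F}[\kappa]$ is either Schwartz or compactly supported, so $\mathbf{A}\mapsto \kappa*\mathbf{A}$ and $\mathbf{j}\mapsto\kappa*\mathbf{j}$ gain arbitrarily many derivatives; in particular $\lVert \partial^\beta(\kappa*\mathbf{A})\rVert_{L^\infty}\le \lVert\partial^\beta\kappa\rVert_{L^2}\,\lVert\mathbf{A}\rVert_{L^2}$ for every multi-index $\beta$. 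This removes the quasilinear derivative loss that is the genuine difficulty in \cite{bejenaru}, so that in the present setting the coupling between the Schr\"odinger and Maxwell parts is effectively semilinear.

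First I would establish local well-posedness for regular data. Rewriting the field equations in \eqref{eq:Hartree-Maxwell 2} as the second-order equation $\partial_t^2\mathbf{A} = \Delta\mathbf{A} + P(\kappa*\mathbf{j})$, with $P = 1-\nabla(\nabla\cdot\Delta^{-1})$ the Leray projection onto divergence-free fields (which also shows that the Coulomb constraint $\nabla\cdot\mathbf{A}=0$ is propagated), I would set up the coupled Duhamel formulation using the Schr\"odinger group $e^{it\Delta}$ and the wave propagators $\cos(t|\nabla|)$ and $|\nabla|^{-1}\sin(t|\nabla|)$. A contraction argument in $\mathscr{C}([0,T];H^2\times H^2\times H^1)$ then closes for small $T$: the Schr\"odinger nonlinearity splits into the smeared coupling terms $(\kappa*\mathbf{A})\cdot\nabla\varphi$ and $(\kappa*\mathbf{A})^2\varphi$, controlled by the smoothing bound above together with the fact that $H^2(\mathbb{R}^3)$ is a Banach algebra, and the self-interaction $v*\lvert\varphi\rvert^2\varphi$, controlled by the same algebra property and the boundedness of $v*(\,\cdot\,)$ (Young, or Hardy--Littlewood--Sobolev for the Coulombic choice); the source $\kappa*\mathbf{j}$ built from \eqref{eq: current} is handled identically, the derivative in $\Im(\bar\varphi\nabla\varphi)$ being harmless because of the $\kappa$-smoothing.

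Next I would globalise using the two conserved quantities: the $L^2$-norm $\lVert\varphi_t\rVert_{L^2}=\lVert\varphi_0\rVert_{L^2}$ and the energy $\mathcal{E}_M[\varphi_t,\alpha_t]$ of \eqref{eq: Pauli energy functional of the HM system}. The energy bounds $\lVert(-i\nabla-\kappa*\mathbf{A})\varphi\rVert^2$ and the field energy $\sum_\lambda\int d^3k\,|k|\,\lvert\alpha(k,\lambda)\rvert^2$ simultaneously; the latter controls $\mathbf{A}$ and $\mathbf{E}$ in the energy norm, hence controls $\lVert\kappa*\mathbf{A}\rVert_{L^\infty}$, and then the diamagnetic inequality $\lvert\nabla\lvert\varphi\rvert\rvert\le\lvert(-i\nabla-\kappa*\mathbf{A})\varphi\rvert$ upgrades the kinetic term to a time-uniform $H^1$ bound on $\varphi_t$. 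Feeding these global $H^1$/energy bounds into a Gr\"onwall estimate for the $H^2\times H^2\times H^1$ norms --- which closes cleanly precisely because there is no derivative loss --- yields global regular solutions by the standard continuation criterion, proving item \eqref{item:1}.

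Finally, for rough data I would approximate $(\varphi_0,\mathbf{A}_0,\mathbf{E}_0)\in H^1\times H^1\times L^2$ by regular data and run energy estimates on the difference of two regular solutions in $H^1\times H^1\times L^2$, using Sobolev embedding $H^1\hookrightarrow L^6$ and Hölder for the cubic term and the smoothing of $\kappa$ for the coupling. Since those difference estimates depend only on the conserved $H^1/L^2$-level norms, the regular solutions form a Cauchy sequence in $\mathscr{C}([0,T];H^1\times H^1\times L^2)$ on every interval, and the limit is the desired rough solution; the very same estimate gives both uniqueness and the Lipschitz continuous dependence asserted in item \eqref{item:3}. I expect the only genuinely delicate point to be the closure of the global a priori bound: the energy functional is not coercive on its own, and one must combine $L^2$-conservation, the field-energy control of $\lVert\kappa*\mathbf{A}\rVert_{L^\infty}$, and the diamagnetic inequality to extract the time-uniform $H^1$ bound on $\varphi$ before the higher-order Gr\"onwall argument can be run.
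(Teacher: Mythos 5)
The paper does not prove this proposition at all: it is imported verbatim from Bejenaru--Tataru \cite{bejenaru} and Nakamura--Wada \cite{nakamurawada}, and the only original content is the remark (stated just before the proposition) that the smeared $\kappa$ and the $L^2+L^\infty$ potential $v$ enter only through convolutions and are therefore ``better'' than the point-charge couplings treated in those references, so that the cited proofs apply unchanged. Your proposal instead attempts a self-contained re-derivation. Your leading observation --- that $\mathbf{A}\mapsto\kappa*\mathbf{A}$ and $\mathbf{j}\mapsto\kappa*\mathbf{j}$ smooth, so the system becomes effectively semilinear --- is exactly the point the authors make, and your globalisation step (charge and energy conservation, positivity of $v$, control of $\norm{\kappa*\mathbf{A}}_{L^\infty}$ via $\lvert\cdot\rvert^{-1}\mathcal{F}[\kappa]\in L^2$ and the field energy, diamagnetic inequality) is sound.

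There is, however, one genuine gap in your local theory: the claim that a contraction in $\mathscr{C}([0,T];H^2\times H^2\times H^1)$ ``closes'' for the term $(\kappa*\mathbf{A})\cdot\nabla\varphi$. Even with $b=\kappa*\mathbf{A}$ smooth in $x$ with all derivatives in $L^\infty$, the map $\varphi\mapsto b\cdot\nabla\varphi$ is bounded $H^2\to H^1$ but \emph{not} $H^2\to H^2$ (the top-order term $b\cdot\nabla\partial^2\varphi$ requires $\varphi\in H^3$), and the Schr\"odinger group is merely unitary on $H^2$, so the Duhamel fixed point loses one derivative; the convolution smooths in $\mathbf{A}$, not in $\varphi$. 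The standard repair is to exploit that $b$ is real and divergence-free, so $b\cdot\nabla$ is formally anti-self-adjoint: in an energy estimate $\tfrac{d}{dt}\norm{\partial^2\varphi}_{L^2}^2$ the top-order contribution $2\Re\smallscp{\partial^2\varphi}{b\cdot\nabla\partial^2\varphi}$ vanishes and only commutators $[\partial^2,b\cdot\nabla]\varphi$, controlled by $\norm{\varphi}_{H^2}$, survive (alternatively one uses the Strichartz/local-smoothing machinery of \cite{nakamurawada}). The same antisymmetry is what makes your $H^1\times H^1\times L^2$ difference estimate for rough data close, and you should invoke it explicitly there as well. With that fix the outline is correct, but as written the key local step is asserted rather than proved.
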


For $m \in \mathbb{R}$ , let $\mathfrak{h}_{m}$ denote the weighed $L^2(\mathbb{R}^3) \otimes \mathbb{C}^2$-space with norm 
\begin{align}
\norm{\alpha}_{\mathfrak{h}_m} = \Big( \sum_{\lambda=1,2} \int d^3k \, \left( 1 + \abs{k}^2 \right)^m \abs{\alpha(k,\lambda)}^2 \Big)^{1/2} .
\end{align}
Throughout this work we will rely on the following statement which results almost immediately from Proposition \ref{prop:1} (see Appendix \ref{section:properties of the solutions of Hatree-Maxwell}).
\begin{corollary}
\label{corollary:Maxwell-Schroedinger mode function}
Let $\abs{\cdot}^{-1/2} \mathcal{F}[\kappa] \in L^2(\mathbb{R}^3, \mathbb{C})$. For every initial datum $(\varphi_0, \alpha_0) \in H^2 (\mathbb{R}^3, \mathbb{C}) \times \big( \mathfrak{h}_{\frac{3}{2}} \cap \mathfrak{h}_{- \frac{1}{2}} \big)$ the Maxwell-Schr\"odinger system \eqref{eq:Hatree-Maxwell}  has a unique global solution in $H^2(\mathbb{R}^3, \mathbb{C}) \times  \mathfrak{h}_{\frac{3}{2}}$.
\end{corollary}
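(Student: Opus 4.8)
The proof reduces the Cauchy problem for the complex formulation \eqref{eq:Hatree-Maxwell} to the well-posedness of the real system \eqref{eq:Hartree-Maxwell 2} established in Proposition~\ref{prop:1}, via the linear change of variables \eqref{eq: complex-real fields time 1}--\eqref{eq: complex-real fields time 2}. First I would make this dictionary quantitative in Fourier space. Up to the polarization sum and the reflection $k \mapsto -k$, the (transverse) fields read $\widehat{\mathbf{A}}(k) \sim \abs{k}^{-1/2}\,\vep_{\lambda}(k)\,\alpha(k,\lambda)$ and $\widehat{\mathbf{E}}(k) \sim \abs{k}^{1/2}\,\vep_{\lambda}(k)\,\alpha(k,\lambda)$, so that $\norm{\mathbf{A}}_{H^s}^2 \sim \int (1+\abs{k}^2)^s \abs{k}^{-1}\abs{\alpha}^2$ and $\norm{\mathbf{E}}_{H^r}^2 \sim \int (1+\abs{k}^2)^r \abs{k}\,\abs{\alpha}^2$, while inverting the relation gives $\abs{\alpha(k,\lambda)}^2 \sim \abs{k}\,\abs{\widehat{\mathbf{A}}(k)}^2 + \abs{k}^{-1}\abs{\widehat{\mathbf{E}}(k)}^2$. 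In particular the common ultraviolet weight $\abs{k}^3$ links $\mathbf{A}\in \dot H^2$, $\mathbf{E}\in \dot H^1$ and $\alpha\in\mathfrak{h}_{3/2}$.

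Second, I would translate the assumption $\alpha_0 \in \mathfrak{h}_{3/2}\cap\mathfrak{h}_{-1/2}$ into $(\mathbf{A}_0,\mathbf{E}_0)\in H^2\times H^1$. The ultraviolet integrability of $\mathbf{A}_0,\mathbf{E}_0$ and of two, resp.\ one, of their derivatives is furnished by the $\abs{k}^3$ weight contained in $\mathfrak{h}_{3/2}$; the only delicate point is the low-frequency integrability of $\mathbf{A}_0$ itself, i.e.\ the finiteness of $\int_{\abs{k}\le 1}\abs{k}^{-1}\abs{\alpha_0}^2$, which is exactly the infrared control supplied by the second factor $\mathfrak{h}_{-1/2}$. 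This is the step where the massless dispersion $\omega(k)=\abs{k}$ makes itself felt and where a bare energy estimate would be insufficient. The standing hypothesis $\abs{\cdot}^{-1/2}\mathcal{F}[\kappa]\in L^2$ plays the companion role of rendering the smeared coupling $\kappa*\mathbf{A}$ and the source $\abs{k}^{-1/2}\mathcal{F}[\kappa]\,\vep_{\lambda}\cdot\mathcal{F}[\vj]$ in \eqref{eq:Hatree-Maxwell} infrared-safe.

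Third, I would apply Proposition~\ref{prop:1}, part~\eqref{item:1}, to the datum $(\varphi_0,\mathbf{A}_0,\mathbf{E}_0)\in H^2\times H^2\times H^1$, obtaining a unique global regular solution with $\varphi\in\mathscr{C}^0(\RRR,H^2)$ and $\mathbf{A}\in\mathscr{C}^0(\RRR,H^2)\cap\mathscr{C}^1(\RRR,H^1)$. Defining $\alpha_t$ from $(\mathbf{A}(\cdot,t),\mathbf{E}(\cdot,t))$ by the inverse transform and using that \eqref{eq: complex-real fields time 1}--\eqref{eq: complex-real fields time 2} is the unique decomposition compatible with the Coulomb gauge and $\dot{\mathbf{A}}=-\mathbf{E}$, one checks that $(\varphi,\alpha)$ solves \eqref{eq:Hatree-Maxwell}. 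For the claimed regularity $\alpha_t\in\mathfrak{h}_{3/2}$ the ultraviolet part is read off from $\abs{\alpha_t}^2\sim \abs{k}\abs{\widehat{\mathbf{A}}}^2+\abs{k}^{-1}\abs{\widehat{\mathbf{E}}}^2$ together with $\mathbf{A}(t)\in H^2$, $\mathbf{E}(t)\in H^1$; near $k=0$ I would instead argue directly from the Duhamel formula for the second line of \eqref{eq:Hatree-Maxwell}, noting that $\alpha_0$ and the source are square-integrable there (the factor $\abs{k}^{-1/2}$ is locally $L^2$ in three dimensions), so that the free propagator $e^{-i\abs{k}t}$ keeps $\alpha_t$ in $L^2$ near the origin uniformly on compact time intervals.

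Finally, uniqueness in $H^2\times\mathfrak{h}_{3/2}$ transfers from the uniqueness in Proposition~\ref{prop:1} through the bijectivity of the change of variables on the relevant spaces, and global existence follows since Proposition~\ref{prop:1} is global. I expect the genuine obstacle to be the infrared bookkeeping of the second step: one must ensure that the extra factor $\abs{k}^{-1/2}$ produced in passing from $\alpha$ to $\mathbf{A}$ does not spoil low-frequency integrability, and verify that it is precisely the $\mathfrak{h}_{-1/2}$-control of the datum, assisted by the hypothesis on $\kappa$, that compensates for it. The remaining ultraviolet bounds and the reconstruction of $\alpha_t$ are routine.
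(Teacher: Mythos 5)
Your proposal follows essentially the same route as the paper's Appendix \ref{section:properties of the solutions of Hatree-Maxwell}: transfer the datum to $(\varphi_0,\mathbf{A}_0,\mathbf{E}_0)\in H^2\times H^2\times H^1$, invoke Proposition \ref{prop:1}, read off the weighted bounds $\abs{\cdot}^{1/2}\alpha_t,\abs{\cdot}^{3/2}\alpha_t\in L^2$ from the Sobolev regularity of $(\mathbf{A}_t,\mathbf{E}_t)$, and recover the remaining infrared information $\alpha_t\in L^2$ near $k=0$ from the integrated (Duhamel) form of the $\alpha$-equation, with the source controlled by $\norm{\abs{\cdot}^{-1/2}\mathcal{F}[\kappa]}_{L^2}\norm{\vj_s}_{L^1(\mathbb{R}^3,\mathbb{C}^3)}$. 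The paper does exactly this (using the integral equation together with H\"older's and Young's inequalities rather than the free propagator), so your plan is correct and matches the published argument.
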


\subsection{The Microscopic Model: Pauli-Fierz Hamiltonian}
\label{sec:micr-model:-pauli}

The microscopic model corresponding to the Maxwell-Schr\"odinger system with
mean-field self-interaction $v*\lvert \varphi \rvert_{}^2\varphi$ consists of many identical
nonrelativistic particles -- obeying Bose-Einstein condensation -- interacting
among themselves by means of a weak pair potential and with a quantized
electromagnetic field in Coulomb's gauge. Contrarily to the ``classical''
case, the microscopic model is known to be well-defined only for extended
charges.
Let us start by defining a Hilbert space $\mathcal{H}^{(N)}_{\hslash}$
depending on two parameters $N\in \mathbb{N},\hslash\in \mathbb{R}^+$ as follows:
\begin{equation}
  \label{eq:hilbertspacemicro}
\mathcal{H}^{(N)}_{\hslash} := L^2_{\mathrm{s}}(\mathbb{R}^{3N})\otimes \Gamma_{\hslash}\bigl(L^2(\mathbb{R}^3)\otimes \mathbb{C}^2\bigr)\;,
\end{equation}
where $L^2_{\mathrm{s}}(\mathbb{R}^{3N})$ is the natural Hilbert space of $N$
identical bosons (the subscript $_{\mathrm{s}}$ indicates symmetry under the interchange of variables) and $\Gamma_{\hslash}$ is the second quantization functor associating
to any (pre-)Hilbert space $\mathfrak{h} = L^2(\mathbb{R}^3) \otimes \mathbb{C}^2$ the corresponding Fock
representation of the Canonical Commutation Relations
$
  [a_{\hslash}(f),a^{*}_{\hslash}(g)]=\hslash\langle f  , g \rangle_{\mathfrak{h}}\; ,
$
with $\hslash$ a semiclassical parameter measuring the degree of noncommutativity
of the quantum field. The Fock representation is the natural one to describe
noninteracting or regularized quantum field theories, the latter being the
case here with $\mathfrak{h}:= L^2(\mathbb{R}^3)\otimes \mathbb{C}^2$. With this interpretation, the
limits $N\to \infty$ and $\hslash\to 0$ describe respectively the regimes in which the
bosons are many and the quantum effects of the field are negligible.
The time evolution is dictated by the Schr\"odinger equation
\begin{equation}
\label{eq:Schroedinger equation microscopic}
  i\partial_t \Psi_{N,\hslash}(t)= H_{N,\hslash}\Psi_{N,\hslash}(t)\; , 
\end{equation}
where the Hamiltonian $H_{N,\hslash}$, called Pauli-Fierz Hamiltonian, is given by
\begin{equation}
  \label{eq:pfhNhregime}
  H_{N,\hslash}= \sum_{j=1}^N\bigl(-i\nabla_j-\mu_{N,\hslash}\,\hat{\mathbf{A}}_{\kappa}(x_j)\bigr)^2 + g_N \sum_{1\leq j<k\leq N}^{} v(x_j-x_k) + \frac{1}{\hslash} H_f \; ,
\end{equation}
where $\mu_{N,\hslash}$ describes the coupling strength between the particles and the
field, $g_N$ the coupling strength between the particles, $\kappa$ and $v$ are the charge distribution and the pair potential introduced
previously,
\begin{equation}
H_f =\sum_{\lambda=1,2} \int \mathrm{d}^3k \, \abs{k} a^*_{\hslash}(k,\lambda) a_{\hslash}(k,\lambda)
\end{equation}
is the field's kinetic energy, with $a^{\sharp}_{\hslash}(k,\lambda)$ the polarized creation and
annihilation operators satisfying the CCR
\begin{equation}
  [a_{\hslash}(k,\lambda),a_{\hslash}^{*}(k',\lambda')]= \hslash \delta_{\lambda\lambda'}\delta(k-k')\;,\; [a_{\hslash}(k,\lambda),a_{\hslash}(k',\lambda')]= [a^{*}_{\hslash}(k,\lambda),a^{*}_{\hslash}(k',\lambda')]=0\; ,
\end{equation}
and\footnote{To simplify the notation we assume $\mathcal{F}[\kappa](k) \in \mathbb{R}$ for all $k \in \mathbb{R}^3$. Theorem \ref{theorem: Pauli main theorem} equally applies if $\mathcal{F}[\kappa]$ is complex valued. In this case,
$\hat{\mathbf{A}}_{\kappa}(x)= \sum_{\lambda=1,2} \int \mathrm{d}^3k \,   \frac{1}{\sqrt{2 \abs{k}}} \vep_{\lambda}(k) \left( \overline{\mathcal{F}[\kappa](k)}  e^{ikx} a_{\hslash}(k,\lambda) + \mathcal{F}[\kappa](k) e^{-ikx} a_{\hslash}^*(k,\lambda)  \right) $.
}
\begin{equation}
  \hat{\mathbf{A}}_{\kappa}(x)= \sum_{\lambda=1,2} \int \mathrm{d}^3k \,   \frac{\mathcal{F}[\kappa](k)}{\sqrt{2 \abs{k}}} \vep_{\lambda}(k) \left( e^{ikx} a_{\hslash}(k,\lambda) + e^{-ikx} a_{\hslash}^*(k,\lambda)  \right)
\end{equation}
the smeared quantized electromagnetic vector potential in Coulomb's
gauge. Let us remark that both $\hat{\mathbf{A}}_{\kappa}(x)$ and $H_f$ depend on $\hslash$ through the creation and annihilation operators,
that have $\hslash$-dependent CCRs. The Hamiltonian $H_{N,\hslash}$ is self-adjoint on
$\mathcal{D}(H_{N,\hslash})= \mathcal{D}(H_{N,\hslash}^{(0)})$, where $H_{N,\hslash}^{(0)}=
H_{N,\hslash}\Bigr\rvert_{\mu_{N,\hslash}=g_N=0}$, whenever $\left(  \abs{\cdot}^{-1} + \abs{\cdot}^{1/2} \right) \mathcal{F}[\kappa] \in L^2(\mathbb{R}^3)$
and $v$ is Kato-infinitesimal with respect to $-\Delta$ \cite{hiroshima,matte, S2004}.

\subsection{Scaling regime}

Our aim is to prove that the Maxwell-Schr\"odinger system emerges in some limit
$N\to \infty$ and/or $\hslash\to 0$, as an effective model of the microscopic Pauli-Fierz
dynamics. This is true only if we couple the parameters $N,\hslash$ suitably, and
choose the coupling constants $\mu_{N,\hslash}, g_N$ accordingly. A possible choice
is given by $N\to \infty$, $\hslash=\frac{1}{N}$, $\mu_{N,\hslash}=1$, $g_N=\frac{1}{N}$. In this
regime the electromagnetic field becomes classical inverse proportionally to
the increasing number of bosons. At the same time the coupling between
the particles and field is of order one, while the coupling between pairs of
particles becomes weak (of order $\frac{1}{N}$). A mathematically equivalent
but physically different choice is given by $N\to \infty$, $\hslash=1$,
$\mu_{N,\hslash}=\frac{1}{\sqrt{N}}$, $g_N=\frac{1}{N}$. Here, the physical interpretation is of many bosons that interact weakly both with the quantized
electromagnetic field (coupling of order $\frac{1}{\sqrt{N}}$) and among
themselves (pair coupling of order $\frac{1}{N}$).
Our result reads as follows:
 
\textit{
Provided that we choose an initial microscopic
state that is ``close enough'' to a non-interacting state representing a
complete condensate and a coherent field of minimal uncertainty, then at any
time $t \geq 0$ the evolution keeps the state ``close'' (in the same sense as
above) to an analogous configuration in which the one-particle wave function
and the argument of the coherent field have been evolved by the coupled
Maxwell-Schr\"odinger equations.}

The described scaling regime has been considered in earlier works for the Nelson model with ultraviolet cutoff \cite{AF2014, F2013, FLMP2021, LP2018}, the renormalized Nelson model \cite{AF2017} and the Fr\"ohlich model \cite{LMS2021}.  In \cite{LP2019} the Nelson model with ultraviolet cutoff has been studied in a limit of many weakly interacting fermions.  The classical behavior of quantum fields has also been proven in different scaling regimes \cite{AFH2022, CCFO2019, CF2018, CFO2019, CFO2020, D1979, FRS2021, FG2017, FS2014, GNV2006, G2017, K2009, LMRSS2021, LRSS2019, M2021, T2002}.
We also would like to mention \cite{S2010} which derives the Maxwell-Schr\"odinger equations in a nonrigorous manner by neglecting certain terms in the Pauli-Fierz Hamiltonian.

\section{Main Result}
\label{sec:main-result:-schr}

From now on, we will keep $N$ as the single parameter and choose $\hslash=1$,
$\mu_{N,\hslash}=\frac{1}{\sqrt{N}}$, $g_N=\frac{1}{N}$. We will use the notations $\mathcal{H}^{(N)} = \mathcal{H}^{(N)}_{\hslash}$, $\mathcal{F}_p = \Gamma_{1}\bigl(L^2(\mathbb{R}^3)\otimes \mathbb{C}^2\bigr)$ with vacuum $\Omega$, $H_N = H_{N,1}$ and $\Psi_N = \Psi_{N,1}$. 
Concerning the interaction potential and charge distribution we will make the assumptions.
\begin{assumption}
\label{assumption:potential and charge distribution}
The (repulsive) interaction potential $v$ is a positive, real, and even function satisfying 
\begin{align}
\norm{v}_{L^2+L^{\infty}(\mathbb{R}^3)} = 
\inf_{v=v_1 + v_2} \{ \norm{v_1}_{L^2(\mathbb{R}^3)}
+ \norm{v_2}_{L^{\infty}(\mathbb{R}^3)}  \} < + \infty.
\end{align}
The charge distribution $\kappa$ with Fourier transform $\mathcal{F}[\kappa]$ satisfies
\begin{equation}
\label{eq: Pauli cut off function}
\big( \abs{\cdot}^{-1} +  \abs{\cdot}^{1/2} \big) \mathcal{F}[\kappa] \in L^2(\mathbb{R}^3)  .
\end{equation}
\end{assumption}
In order to state our result we define for $\Psi_N \in \mathcal{H}^{(N)}$ the one-particle reduced density matrix of the charged particles $\gamma_{\Psi_{N}}: L^2(\mathbb{R}^3) \rightarrow L^2(\mathbb{R}^3)$ by
\begin{align}
\label{eq: definition reduced one-particle matrix charged particles}
\gamma_{\Psi_N}^{(1,0)} \coloneqq \tr_{2,\ldots, N} \, \tr_{\mathcal{F}_p} \ket{\Psi_{N}} \bra{\Psi_{N}},
\end{align}
where $\tr_{2,\ldots, N}$  denotes the partial trace over the coordinates $x_2,\ldots, x_N$ and $\tr_{\mathcal{F}_p}$ is the trace over Fock space. In addition, we introduce the number of photon operator
\begin{align}
\mathcal{N} = \sum_{\lambda = 1,2} \int d^3 k \, a^*(k,\lambda) a(k,\lambda)
\end{align}
and the unitary Weyl operator 
\begin{align}
W(f) &= 
\exp \Big( \sum_{\lambda = 1,2} \int d^3k \, f(k,\lambda) a^*(k,\lambda) - \overline{f(k, \lambda)} a(k, \lambda)  \Big)
\quad \text{with} \; f \in \mathfrak{h}.
\end{align}
Our result is the following.

\begin{theorem}
\label{theorem: Pauli main theorem}
Let $v$ and $\kappa$ satisfy Assumption \ref{assumption:potential and charge distribution},  $(\varphi_0, \alpha_0) \in H^2(\mathbb{R}^3, \mathbb{C}) \times \big( \mathfrak{h}_{\frac{3}{2}} \cap \mathfrak{h}_{- \frac{1}{2}} \big)$ with $\norm{\varphi_0}_{L^2(\mathbb{R}^3)} = 1$
and $\Psi_{N,0} \in  \mathcal{D}\left( H_N \right) \cap \mathcal{D} \left( \mathcal{N}^{1/2} \right)$ such that $\norm{\Psi_{N,0}}_{\mathcal{H}^{(N)}} = 1$. Define
\begin{align}
a_N &\coloneqq \tr_{L^2(\mathbb{R}^3)} \abs{\gamma_{\Psi_{N,0}}^{(1,0)} - \ket{\varphi_0} \bra{\varphi_0}} 
\\
\label{eq:initial condition in main theorem coherence of photons}
b_N &\coloneqq N^{-1} \scp{W^{-1}(\sqrt{N} \alpha_0)\Psi_{N,0}}{\mathcal{N} W^{-1}(\sqrt{N} \alpha_0)\Psi_{N,0}}_{\mathcal{H}^{(N)}}
\quad \text{and}  \\
\label{eq:initial condition in main theorem variance of the energy}
c_N  &\coloneqq \norm{\left(N^{-1} H_N - \mathcal{E}_M\left[\varphi_0, \alpha_0  \right] \right) \Psi_{N,0} }_{\mathcal{H}^{(N)}}^2  .
\end{align}
Let $(\varphi_t, \alpha_t)$ and  $\Psi_{N,t}$ be the unique solutions of \eqref{eq:Hatree-Maxwell} and \eqref{eq:Schroedinger equation microscopic}, respectively.
Then, there exists a monotone increasing function $C(s)$ of the norms $\norm{\varphi_s}_{H^2(\mathbb{R}^3)}$, $\norm{\abs{\cdot}^{1/2} \alpha_s}_{\mathfrak{h}}$, $\norm{v}_{L^2 + L^{\infty}(\mathbb{R}^3)}$ and 
$\norm{\left( \abs{\cdot}^{-1/2} + \abs{\cdot}^{-1} \right) \mathcal{F}[\kappa]}_{L^2(\mathbb{R}^3)}$
such that 
\begin{align}
 \label{eq: main theorem 1}
\tr_{L^2(\mathbb{R}^3)} \abs{\gamma_{N,t}^{(1,0)} - \ket{\varphi_t} \bra{\varphi_t}} &\leq
\sqrt{a_N + b_N + c_N + N^{-1} } \,  e^{ \int_0^t ds \, C(s)} ,  \\
\label{eq: main theorem 2}
N^{-1} \scp{W^{-1}(\sqrt{N} \alpha_t)\Psi_{N,t}}{\mathcal{N} W^{-1}(\sqrt{N} \alpha_t)\Psi_{N,t}}_{\mathcal{H}^{(N)}}  &\leq  
\left( a_N + b_N + c_N + N^{-1} \right) \,   e^{\int_0^t ds \, C(s)}  
\end{align}
for any $t \geq 0$.
In particular, for $\Psi_{N,0} = \varphi_{0}^{\otimes N} \otimes  W(\sqrt{N} \alpha_0) \Omega$ one obtains
\begin{align}
\label{eq: main theorem 3}
\tr_{L^2(\mathbb{R}^3)} \abs{\gamma_{N,t}^{(1,0)} - \ket{\varphi_t} \bra{\varphi_t}} &\leq
 N^{-1/2}  C(0) e^{\int_0^t ds \, C(s)},  \\
\label{eq: main theorem 4}
N^{-1} \scp{W^{-1}(\sqrt{N} \alpha_t)\Psi_{N,t}}{\mathcal{N} W^{-1}(\sqrt{N} \alpha_t)\Psi_{N,t}}_{\mathcal{H}^{(N)}}  &\leq
 N^{-1}  C(0) e^{\int_0^t ds \, C(s)}. 
\end{align}
\end{theorem}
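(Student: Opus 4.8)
The plan is to adapt the Gr\"onwall scheme of \cite{LP2020} to a single functional that simultaneously measures the depletion of the condensate and the number of photons in the fluctuation field, the decisive new ingredient being that the infrared part of the photon fluctuations is controlled through the propagation of the number operator established in \cite{AFH2022}, rather than through the field energy $H_f$ (whose $\abs{k}$-weight degenerates as $\abs{k}\to 0$). Concretely, I would set $p_t:=\ketbr{\varphi_t}$ and $q_t:=\id-p_t$ on a single-particle factor, write $q_{1,t}$ for its action on the first particle, and introduce the fluctuation vector $\xi_{N,t}:=W^{-1}(\sqrt{N}\alpha_t)\Psi_{N,t}$. The functional is
\[
\beta(t):=\scp{\Psi_{N,t}}{q_{1,t}\Psi_{N,t}}+N^{-1}\scp{\xi_{N,t}}{\mathcal{N}\xi_{N,t}}=:\beta^a(t)+\beta^b(t).
\]
By the standard comparison between reduced density matrices and counting functionals, $\beta^a(t)\le\tr_{L^2}\abs{\gamma_{N,t}^{(1,0)}-p_t}\le 2\sqrt{\beta^a(t)}$, which is precisely what produces the square root in \eqref{eq: main theorem 1} as opposed to \eqref{eq: main theorem 2}. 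At $t=0$ one has $\beta^a(0)\le a_N$ and $\beta^b(0)=b_N$, so $\beta(0)\le a_N+b_N$.

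Next I would differentiate $\beta(t)$. For $\beta^a$ this is Pickl's method: combining the microscopic equation \eqref{eq:Schroedinger equation microscopic} with the time derivative of $p_t$ supplied by the Maxwell-Schr\"odinger flow \eqref{eq:Hatree-Maxwell}, the mean-field parts cancel and one is left with commutator terms which, as in \cite{LP2020}, are estimated by Assumption \ref{assumption:potential and charge distribution} together with $\norm{\varphi_t}_{H^2(\mathbb{R}^3)}$ and $\norm{\abs{\cdot}^{1/2}\alpha_t}_{\mathfrak{h}}$. For $\beta^b$ I would pass to the Weyl-transformed generator $\widetilde H_N(t):=W^{-1}(\sqrt{N}\alpha_t)H_NW(\sqrt{N}\alpha_t)-i\,W^{-1}(\sqrt{N}\alpha_t)\partial_t W(\sqrt{N}\alpha_t)$, so that $i\partial_t\xi_{N,t}=\widetilde H_N(t)\xi_{N,t}$; here the choice of $\alpha_t$ as a solution of the second line of \eqref{eq:Hatree-Maxwell} is exactly what cancels the terms of $\widetilde H_N(t)$ that are linear in the fluctuation field and of order $\sqrt{N}$. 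The derivative $\frac{d}{dt}\beta^b(t)=N^{-1}\scp{\xi_{N,t}}{i[\widetilde H_N(t),\mathcal{N}]\xi_{N,t}}$ then reduces to expressions that are linear and quadratic in the fluctuation creation/annihilation operators smeared against $\abs{\cdot}^{-1/2}\mathcal{F}[\kappa]$.

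The main obstacle, and the very point of the paper, is to bound these remaining field terms uniformly in the infrared. The relevant creation/annihilation estimates, of the schematic form $\norm{a(\abs{\cdot}^{-1/2}\mathcal{F}[\kappa])\xi}\lesssim\norm{\left(\abs{\cdot}^{-1/2}+\abs{\cdot}^{-1}\right)\mathcal{F}[\kappa]}_{L^2(\mathbb{R}^3)}\norm{\mathcal{N}^{1/2}\xi}$, need only the integrability guaranteed by Assumption \ref{assumption:potential and charge distribution}, but they require an a priori control of $\scp{\xi_{N,t}}{\mathcal{N}\xi_{N,t}}$ over the whole interval; for soft photons this cannot be extracted from $H_f$. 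At this point I would invoke \cite{AFH2022}: the number operator $\mathcal{N}$ is propagated by the Pauli-Fierz evolution, so that $\scp{\Psi_{N,t}}{\mathcal{N}\Psi_{N,t}}$ stays finite and controlled, which --- up to the shift by $\sqrt{N}\alpha_t$ --- furnishes the $\mathcal{N}^{1/2}$-bounds on $\xi_{N,t}$ needed to close the estimate without any reference to the field energy. The conserved energy defect enters here as well: since $H_N$ and $\mathcal{E}_M$ are both constant along their respective flows, $\big\lvert N^{-1}\scp{\Psi_{N,t}}{H_N\Psi_{N,t}}-\mathcal{E}_M[\varphi_t,\alpha_t]\big\rvert\le\sqrt{c_N}$ for all $t$, and the terms of $\dot\beta$ proportional to this defect are absorbed through Young's inequality $\sqrt{c_N}\sqrt{\beta}\le\tfrac12(c_N+\beta)$, which is what turns the $\sqrt{c_N}$ into the $c_N$ appearing in \eqref{eq: main theorem 1}--\eqref{eq: main theorem 2}.

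Assembling these bounds I expect a differential inequality $\dot\beta(t)\le C(t)\big(\beta(t)+c_N+N^{-1}\big)$ with $C(t)$ monotone in the stated norms; Gr\"onwall's lemma then gives $\beta(t)\le\big(a_N+b_N+c_N+N^{-1}\big)e^{\int_0^t C(s)\,ds}$. Reading off $\beta^b(t)\le\beta(t)$ yields \eqref{eq: main theorem 2}, while $\tr_{L^2}\abs{\gamma_{N,t}^{(1,0)}-p_t}\le 2\sqrt{\beta^a(t)}\le 2\sqrt{\beta(t)}$ yields \eqref{eq: main theorem 1}. Finally, for $\Psi_{N,0}=\varphi_0^{\otimes N}\otimes W(\sqrt{N}\alpha_0)\Omega$ one has $a_N=0$ and $b_N=0$ immediately, and a separate computation shows $c_N=\mathcal{O}(N^{-1})$ --- the energy variance of a product--coherent state being of lower order in $N$ thanks to $g_N=N^{-1}$ --- so that $a_N+b_N+c_N+N^{-1}=\mathcal{O}(N^{-1})$, giving the rates \eqref{eq: main theorem 3}--\eqref{eq: main theorem 4}.
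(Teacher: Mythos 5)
Your proposal follows the paper's strategy in all essentials: the same Gr\"onwall scheme for a counting functional $\beta^a+\beta^b$ built from $q_1^{\varphi_t}$ and from the number operator in the Weyl-shifted frame, the same identification of the infrared obstruction (the $\abs{k}$-weight in $H_f$ suppresses soft photons), and the same remedy, namely the propagation of $\mathcal{N}$ along the Pauli--Fierz flow taken from \cite{AFH2022}. Treating the energy defect as a conserved quantity bounded by $\sqrt{c_N}$ instead of carrying a third component $\beta^c$ inside the functional is only a cosmetic difference, since $\beta^c$ is constant in time anyway; note only that the kinetic-energy control in the Gr\"onwall step uses the norm $\norm{(N^{-1}H_N-\mathcal{E}_M)\Psi_{N,t}}=\sqrt{c_N}$ rather than the first moment you quote, but both are conserved and equal to the same bound.

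Two points that your sketch passes over are precisely where the paper's added work beyond \cite{LP2020} lies. First, before one can even run Gr\"onwall one must know that $\beta^b(t)$ is defined for all $t$, i.e.\ that $e^{-iH_Nt}$ preserves $\mathcal{D}(H_N)\cap\mathcal{D}(\mathcal{N}^{1/2})$; unlike $\mathcal{D}(H_f)$, the space $\mathcal{D}(\mathcal{N}^{1/2})$ is not contained in $\mathcal{D}(H_N)$, so this is not automatic, and it is exactly what Lemma \ref{lemma: bound number operator} (the estimate of \cite{AFH2022}) delivers --- you invoke that result only as an a priori input to the field estimates, but it is needed already to make the functional well defined along the evolution. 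Second, the identity $\frac{d}{dt}\beta^b=N^{-1}\scp{\xi_{N,t}}{i[\widetilde H_N(t),\mathcal{N}]\xi_{N,t}}$ cannot be written down directly: $\xi_{N,t}$ need only lie in $\mathcal{D}(\mathcal{N}^{1/2})$, so neither $\mathcal{N}\xi_{N,t}$ nor the commutator with the generator makes sense a priori. The paper differentiates the regularized functional built from $\mathcal{N}_{\delta}=\mathcal{N}e^{-\delta\mathcal{N}}$, bounds the commutators $[\mathcal{N}_{\delta},\vA(x_j)]$ uniformly in $\delta$ by $(H_f+1)^{1/2}$-norms, and then removes the regularization by monotone convergence. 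Neither point invalidates your plan, but both must be supplied for the argument to close.
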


\begin{remark}
Let $\gamma_{\Psi_{N,t}}^{(0,1)}$ be the one-particle reduced density matrix of the photons with integral kernel
\begin{align}
\gamma_{\Psi_{N,t}}^{(0,1)}(k, \lambda; k' , \lambda') = N^{-1} \scp{\Psi_{N,t}}{a^*(k', \lambda') a(k,\lambda) \Psi_{N,t}}_{\mathcal{H}^{(N)}} .
\end{align}
By similar means as in \cite[Lemma 5.3]{LP2020} one obtains
\begin{align}
\tr_{\mathfrak{h}} \abs{\gamma_{N,t}^{(0,1)} - \ket{\alpha_t} \bra{\alpha_t}} &\leq \max_{j=1,2} \left( a_N + b_N + c_N + N^{-1} \right)^{j/2} \, \left( 1 + \norm{\alpha_t}_{\mathfrak{h}} \right)   e^{\int_0^t ds \, C(s)}  
\end{align}
from \eqref{eq: main theorem 2} and
\begin{align}
\tr_{\mathfrak{h}} \abs{\gamma_{N,t}^{(0,1)} - \ket{\alpha_t} \bra{\alpha_t}} &\leq N^{-1/2} \left( 1 + \norm{\alpha_t}_{\mathfrak{h}} \right)   C(0) e^{\int_0^t ds \, C(s)}  
\end{align}
for initial product states $\Psi_{N,0} = \varphi_{0}^{\otimes N} \otimes  W(\sqrt{N} \alpha_0) \Omega$ from \eqref{eq: main theorem 4}.
\end{remark}

\begin{remark}
In \cite{LP2020} Theorem \ref{theorem: Pauli main theorem} was proven for the charge distribution \eqref{eq:definition charge distribution with sharp cutoff} and with the number operator $\mathcal{N}$ in \eqref{eq:initial condition in main theorem coherence of photons}, \eqref{eq: main theorem 2} and \eqref{eq: main theorem 4} being replaced by the field energy $H_f$. Because of Markov's inequality, 
\begin{align*}
\sum_{\lambda = 1,2} \int_{\abs{k} \geq I} d^3 k \,  a^*(k,\lambda) a(k, \lambda) \leq I^{-1} H_f ,
\end{align*}
one can use the field energy to conclude that the quantum fluctuations around the coherent state are subleading for all photons with $\abs{k} \geq I$. For sufficiently small $a_N$, $b_N$ and $c_N$ one can choose $I \sim N^{-a}$ with $a < 1$. However, this choice does not provide information about the coherence of soft photons with frequencies below this threshold.
\end{remark}

\section{Proof of the result}

The rest of the article outlines the proof of Theorem \eqref{theorem: Pauli main theorem}. We will proceed as follows:
\begin{enumerate}

\item We define a functional $\beta[\Psi_N, \varphi, \alpha]$ which measures if the charges of the many-body state $\Psi_N$ form a Bose-Einstein condensate with condensate wave function $\varphi$ and if the photons are in a coherent state with mean photon number $N \norm{\alpha}_{\mathfrak{h}}^2$.

\item Next, we show that the domain of $\beta$ contains the solutions $(\varphi_t, \alpha_t)$ of \eqref{eq:Hatree-Maxwell} and $\Psi_{N,t}$ of \eqref{eq:Schroedinger equation microscopic} from Theorem \eqref{theorem: Pauli main theorem} for all $t \geq 0$.

\item  Afterwards, we compute the change of $\beta[\Psi_{N,t}, \varphi_t, \alpha_t]$ in time.

\item Finally, we control the growth of $\beta[\Psi_{N,t}, \varphi_t, \alpha_t]$ with the help of Gr\"onwall's inequality. This concludes the proof.

\end{enumerate}

In doing so, we will rely on the findings from \cite{LP2020} and rather explain how the original proof of \cite{LP2020} has to be adapted. Most of the modifications are necessary to show the invariance of the domain in Step 2 and to compute the time derivative of $\beta$ in Step 3.

\subsection{Definition of the functional}

We define a functional which consists of three parts.
\begin{definition}
\label{definition:functional}
For $\varphi \in L^2(\mathbb{R}^3)$ we define $p_1^{\varphi}: L^2(\mathbb{R}^{3N}) \rightarrow L^2(\mathbb{R}^{3N})$ by
\begin{align}
p_1^{\varphi} f(x_1, \ldots, x_N)
&\coloneqq \varphi(x_1) \int d^3 x_1 \overline{\varphi(x_1)} f(x_1, \ldots, x_N)
\end{align}
and $q_1^{\varphi} \coloneqq 1_{L^2(\mathbb{R}^{3N})} - p_1^{\varphi}$.
Now, let $\Psi_{N} \in \mathcal{D}(H_N) \cap \mathcal{D} \left( \mathcal{N}^{1/2}  \right)$, $\varphi \in H^1(\mathbb{R}^3)$, $\alpha \in \mathfrak{h}_{\frac{1}{2}}$. Then,
\begin{align}
\begin{split}
\beta^a(\Psi_{N},\varphi) &\coloneqq  \SCP{\Psi_{N}}{q_1^{\varphi} \otimes \id_{\mathcal{F}_p} \, \Psi_{N}}_{\mathcal{H}^{(N)}} , 
\\
\beta^b(\Psi_{N},\alpha) &\coloneqq 
 \sum_{\lambda=1,2} \int d^3k \,  \SCP{\left( \frac{a(k,\lambda)}{\sqrt{N}} - \alpha(k,\lambda) \right) \Psi_{N}}{ \left( \frac{a(k,\lambda)}{\sqrt{N}} - \alpha(k,\lambda) \right) \Psi_{N}}_{\mathcal{H}^{(N)}} ,
\\
\beta^c(\Psi_{N},\varphi, \alpha) &\coloneqq \SCP{\left( \frac{H_N}{N} - \mathcal{E}_M[\varphi, \alpha] \right) \Psi_{N}}{\left( \frac{H_N}{N} - \mathcal{E}_M[\varphi, \alpha] \right) \Psi_{N}}_{\mathcal{H}^{(N)}}
\end{split}
\end{align}
and the functional $\beta: \big( \mathcal{D}(H_N) \cap \mathcal{D} \left( \mathcal{N}^{1/2} \right) \big) \times H^1(\mathbb{R}^3) \times \mathfrak{h}_{\frac{1}{2}}  \rightarrow \mathbb{R}_0^+ $
is defined as
$ \beta \coloneqq \beta^a + \beta^b + \beta^c$.
\end{definition}
The functional $\beta^a$ measures if the charges of the many-body state are in a Bose-Einstein condensate (we refer to \cite{KP2009, P2011} for a comprehensive introduction). Its relation to the trace norm distance of the one-particle reduced density matrix is given by (see, e.g. \cite[Lemma 5.3]{LP2020})
\begin{align}
\label{eq:relation beta-a and reduced density}
\beta^a(\Psi_{N},\varphi) \leq \tr_{L^2(\mathbb{R}^3)} \abs{\gamma_{\Psi_{N,0}}^{(1,0)}  - \ket{\varphi} \bra{\varphi}} \leq \sqrt{8 \beta^a(\Psi_{N},\varphi)} .
\end{align}
The functional $\beta^b$ quantifies the fluctuations of $\Psi_N$ around the coherent state $W(\sqrt{N} \alpha) \Omega$.  Using property \eqref{eq:shifting property Weyl operators} of the Weyl operators it can be written as
\begin{align}
\label{eq:relation beta-b and number operator}
\beta^b(\Psi_{N},\alpha) &=
N^{-1} \scp{W^{-1}(\sqrt{N} \alpha)\Psi_{N}}{\mathcal{N} \, W^{-1}(\sqrt{N} \alpha)\Psi_{N}}_{\mathcal{H}^{(N)}} ,
\end{align}
showing that it is the same quantity as usually considered in the coherent state approach \cite{RS2009}.
While $\beta^a$ and $\beta^b$ measure the deviation of $\Psi_N$ from the product state $\varphi^{\otimes N} \otimes W(\sqrt{N} \alpha) \Omega$ the functional $\beta^c$ is introduced for technical reasons. It quantifies the fluctuations of the many-body energy per particle around the energy of the Maxwell-Schr\"odinger system.  

In the original  proof of \cite{LP2020} the functional $\beta$ was considered with $\beta^b\left( \Psi_N, \alpha \right)$ being replaced by
\begin{align}
\label{eq:definition beta-b-tilde}
\begin{split}
\widetilde{\beta}^b \left( \Psi_N, \alpha \right)
&\coloneqq \sum_{\lambda=1,2} \int d^3k \, \abs{k}  \SCP{\left( \frac{a(k,\lambda)}{\sqrt{N}} - \alpha(k,\lambda) \right) \Psi_{N}}{ \left( \frac{a(k,\lambda)}{\sqrt{N}} - \alpha(k,\lambda) \right) \Psi_{N}}_{\mathcal{H}^{(N)}}
\\
&=
N^{-1} \scp{W^{-1}(\sqrt{N} \alpha)\Psi_{N}}{H_f \, W^{-1}(\sqrt{N} \alpha)\Psi_{N}}_{\mathcal{H}^{(N)}} .
\end{split}
\end{align}
This definition has the advantage that it can be defined for many-body states $\Psi_N$ in the domain $\mathcal{D}\left( H_N \right) = \left( H^2(\mathbb{R}^{3N}, \mathbb{C}) \otimes \mathcal{F} \right) \cap \mathcal{D} \left( H_f \right)$ which is invariant under the time evolution $e^{- i H_N t}$. The additional difficulties with respect to \cite{LP2020} actually originate from the fact that $\mathcal{D} \left( \mathcal{N}^{1/2} \right)$  (in contrast to $\mathcal{D} \left( H_f\right)$) is not contained in the domain of the Pauli-Fierz Hamiltonian. 
On the contrary $\widetilde{\beta}^b$ does not allow to investigate the coherence of photons with small frequencies because the factor $\abs{k}$ in the integral on the right hand side of \eqref{eq:definition beta-b-tilde} suppresses contributions from photons with small energies.

\subsection{Invariance of the domain}

Throughout the rest of the article $(\varphi_t, \alpha_t)$ and  $\Psi_{N,t}$ denote the solutions of \eqref{eq:Hatree-Maxwell} and \eqref{eq:Schroedinger equation microscopic} from Theorem \eqref{theorem: Pauli main theorem}. In this section we show that $(\Psi_{N,t}, \varphi_t, \alpha_t) \in \big( \mathcal{D}(H_N) \cap \mathcal{D} \left( \mathcal{N}^{1/2}  \right) \big) \times H^2(\mathbb{R}^3) \times \mathfrak{h}_{\frac{3}{2}}$ for all $t \geq 0$. The condition on the Maxwell-Schr\"odinger solutions is satisfied because of Corollary \ref{corollary:Maxwell-Schroedinger mode function}. While $\mathcal{D} \left( H_N \right)$ is invariant under the evolution of the Pauli-Fierz Hamiltonian, due to Stone's theorem, the invariance of $\mathcal{D} \left( \mathcal{N}^{1/2} \right)$ is less clear because the photon number is not conserved during the time evolution.
The next statement, however, displays that the number of photons can be controlled by the energy of the system.\footnote{Inequality \eqref{eq: bound number operator} was originally proven by Fumio Hiroshima and appeared in a slightly different form (for the second instead of the first moment of the number operator) in \cite[Proposition 3.11]{AFH2022}. We would like to thank Fumio Hiroshima for sharing his notes with us. The proof is presented again for the convenience of the reader.}
\begin{lemma}
\label{lemma: bound number operator}
Let $\Psi_{N,0} \in \mathcal{D} \big( H_N^{1/2} \big) \cap \mathcal{D} \left( \mathcal{N}^{1/2} \right)$. Then, there exists a constant (depending on $N$ and the choice of $\kappa$) such that
\begin{align}
\label{eq: bound number operator}
\norm{\mathcal{N}^{1/2} e^{- i H_N t} \Psi_{N,0}}
&\leq  \norm{\mathcal{N}^{1/2} \Psi_{N,0}} + C \, t^{1/2} \norm{\left( H_N + C \right)^{1/2} \Psi_{N,0}}
\quad \text{for all} \;
t \geq 0 .
\end{align}
This implies $e^{- i H_N t} \, \mathcal{D} \big( H_N^{1/2} \big) \cap \mathcal{D} \left( \mathcal{N}^{1/2} \right) =  \mathcal{D} \big( H_N^{1/2} \big) \cap \mathcal{D} \left( \mathcal{N}^{1/2} \right)$.
\end{lemma}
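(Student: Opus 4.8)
The plan is to derive \eqref{eq: bound number operator} from a differential inequality for the map $t\mapsto \langle \Psi_{N,t},\mathcal{N}\Psi_{N,t}\rangle$ (with $\Psi_{N,t}=e^{-iH_Nt}\Psi_{N,0}$) and then to integrate it, exploiting that $\langle \Psi_{N,t},H_N\Psi_{N,t}\rangle$ is conserved. Since the regularity needed to differentiate is exactly what the lemma asserts, I would first run the whole computation for $\Psi_{N,0}$ in the core $\mathcal{D}(H_N)\cap\mathcal{D}(\mathcal{N})$ --- or, equivalently, with $\mathcal{N}$ replaced by the bounded regularisation $\mathcal{N}(1+\varepsilon\mathcal{N})^{-1}$, checking that all bounds are uniform in $\varepsilon$ --- and only at the end pass to general $\Psi_{N,0}\in\mathcal{D}(H_N^{1/2})\cap\mathcal{D}(\mathcal{N}^{1/2})$ by approximation.

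Since $\mathcal{N}$ commutes with $H_f$, with $\sum_j(-\Delta_j)$, and with the pair interaction, only the field-coupling terms $W=\sum_{j}\big(2i\mu\,\hat{\mathbf{A}}_\kappa(x_j)\cdot\nabla_j+\mu^2\hat{\mathbf{A}}_\kappa(x_j)^2\big)$, $\mu=N^{-1/2}$, contribute to $i[H_N,\mathcal{N}]$. Writing $G_x$ for the form factor $\tfrac{\mathcal{F}[\kappa](k)}{\sqrt{2|k|}}\vep_\lambda(k)e^{-ikx}$ and using $[\hat{\mathbf{A}}_\kappa(x),\mathcal{N}]=a(G_x)-a^*(G_x)$, the commutator splits into a part $L$ linear and a part $Q$ quadratic in the smeared operators $a^\sharp(G_{x_j})$; I would keep track of the additional term from $[\nabla_j,a(G_{x_j})]\neq0$, which has the same structure and is lower order. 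The crux is that these smeared operators are controlled by the \emph{energy}, not merely by $\mathcal{N}$: Cauchy--Schwarz in $k$ gives $\|a(G_{x_j})\Psi\|^2\le \big\||\cdot|^{-1}\mathcal{F}[\kappa]\big\|_{L^2}^2\,\langle\Psi,H_f\Psi\rangle$, which is finite precisely because of the hypothesis $|\cdot|^{-1}\mathcal{F}[\kappa]\in L^2$, and since $H_f\le H_N$ this is $\le C\|(H_N+C)^{1/2}\Psi\|^2$ uniformly in $x_j$; the creation operator is handled the same way up to the constant $\|G_{x_j}\|\,\|\Psi\|$, so that $\|\hat{\mathbf{A}}_\kappa(x_j)\Psi\|$ and $\|(a-a^*)(G_{x_j})\Psi\|$ are both energy-bounded uniformly in $j$. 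This is exactly where the infrared difficulty is bypassed: although $\mathcal{N}$ is not controlled by $H_f$, the operators carrying the weight $|k|^{-1/2}$ are.

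Inserting these estimates and using $\mu\sqrt N=1$ and $\mu^2N=1$ to absorb the sums over $j$, the quadratic part is bounded directly, $|\langle\Psi,Q\Psi\rangle|\le 2\mu^2\sum_j\|\hat{\mathbf{A}}_\kappa(x_j)\Psi\|\,\|(a-a^*)(G_{x_j})\Psi\|\le C\|(H_N+C)^{1/2}\Psi\|^2$, while the linear part additionally uses the standard relative bound $\sum_j\|\nabla_j\Psi\|^2\le C\|(H_N+C)^{1/2}\Psi\|^2$ (from the self-adjointness theory recalled in Section~\ref{sec:micr-model:-pauli}) to give $|\langle\Psi,L\Psi\rangle|\le 2\mu\sum_j\|(a^*-a)(G_{x_j})\Psi\|\,\|\nabla_j\Psi\|\le C\|(H_N+C)^{1/2}\Psi\|^2$. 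Hence, with $C=C(N,\kappa)$, one obtains $\big|\tfrac{d}{dt}\langle\Psi_{N,t},\mathcal{N}\Psi_{N,t}\rangle\big|\le C\langle\Psi_{N,t},(H_N+C)\Psi_{N,t}\rangle=C\|(H_N+C)^{1/2}\Psi_{N,0}\|^2$, the last step by energy conservation. Integrating yields $\|\mathcal{N}^{1/2}\Psi_{N,t}\|^2\le\|\mathcal{N}^{1/2}\Psi_{N,0}\|^2+Ct\,\|(H_N+C)^{1/2}\Psi_{N,0}\|^2$, and $\sqrt{a+b}\le\sqrt a+\sqrt b$ gives \eqref{eq: bound number operator}. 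The characteristic $t^{1/2}$ is thus a direct consequence of the right-hand side of the differential inequality being \emph{constant} in time.

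For the domain statement, \eqref{eq: bound number operator} shows $e^{-iH_Nt}$ maps $\mathcal{D}(H_N^{1/2})\cap\mathcal{D}(\mathcal{N}^{1/2})$ into $\mathcal{D}(\mathcal{N}^{1/2})$; since $\|H_N^{1/2}\Psi_{N,t}\|=\|H_N^{1/2}\Psi_{N,0}\|$ by functional calculus, the image lies in the intersection, and applying the same estimate to $e^{+iH_Nt}$ gives the reverse inclusion, hence the claimed equality. I expect the real work to be not the commutator bounds --- which, granted the energy control of the smeared operators, are routine --- but the rigorous justification of the time derivative, namely proving a priori that $\Psi_{N,t}$ remains in $\mathcal{D}(\mathcal{N})$ long enough to differentiate. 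The regularisation $\mathcal{N}(1+\varepsilon\mathcal{N})^{-1}$ circumvents this, at the price of verifying that every commutator estimate is uniform in $\varepsilon$ and of carefully treating the ordering terms $[\nabla_j,a(G_{x_j})]$.
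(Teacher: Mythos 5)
Your proposal is correct and follows essentially the same route as the paper: regularize $\mathcal{N}$ by a bounded function of itself (the paper uses $\mathcal{N}e^{-\delta\mathcal{N}}$ rather than $\mathcal{N}(1+\varepsilon\mathcal{N})^{-1}$), observe that only the field-coupling terms contribute to the commutator with $H_N$, bound $[\mathcal{N},\hat{\mathbf{A}}_{\kappa}(x)]=a(G_x)-a^*(G_x)$ by $(H_f+1)^{1/2}$ using $(\abs{\cdot}^{-1}+\abs{\cdot}^{-1/2})\mathcal{F}[\kappa]\in L^2$, and integrate the resulting differential inequality whose right-hand side is constant by energy conservation. The only cosmetic differences are that the paper controls $H_N^{(0)}=-\sum_j\Delta_j+H_f$ by $H_N$ via the closed graph theorem rather than your direct form inequalities, and that your worry about $[\nabla_j,a(G_{x_j})]$ is moot since it vanishes in the relevant dot product by transversality.
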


\begin{proof}[Proof of Lemma \ref{lemma: bound number operator}]

In the following, we use the notation
\begin{align}
a(g) &= \sum_{\lambda=1,2} \int d^3k \, \overline{g(k,\lambda)} a(k,\lambda) ,
\quad 
a^*(g) = \sum_{\lambda=1,2} \int d^3k \, g(k, \lambda) a^*(k,\lambda)
\end{align}
and 
\begin{align}
G_x(k,\lambda) = \mathcal{F}[\kappa](k) \frac{1}{\sqrt{2 \abs{k}}} \vep_{\lambda}(k) e^{-ikx} .
\end{align}
The vector potential can then be written as
$\vA(x) = a \left( G_x \right) + a^* \left( G_x \right)$.
Recall the standard estimates for the annihilation and creation operators
\begin{align}
\begin{split}
\label{eq: bounds for the annihilation and creation operator}
\norm{a(g) \Psi} &\leq \norm{g}_{\mathfrak{h}} \norm{\mathcal{N}^{1/2} \Psi} ,
\\
\norm{a^*(g) \Psi} &\leq \norm{g}_{\mathfrak{h}} \norm{\left( \mathcal{N}  + 1 \right)^{1/2} \Psi},
\\
\norm{a(g) \Psi} &\leq \norm{\abs{\cdot}^{-1/2} g}_{\mathfrak{h}} \norm{H_f^{1/2} \Psi} ,
\\
\norm{a^*(g) \Psi} &\leq \norm{\left( 1 + \abs{\cdot}^{-1/2} \right) g}_{\mathfrak{h}} \norm{\left( H_f  + 1 \right)^{1/2} \Psi} .
\end{split}
\end{align}
Let $\Psi_{N,0} \in   \mathcal{D} \big( H_N^{1/2} \big) \cap \mathcal{D} \left( \mathcal{N}^{1/2} \right)$, $\Psi_{N,t}= e^{- i H_N t} \Psi_{N,0}$, $\delta \geq 0$ and consider the bounded operator $\ND = \mathcal{N} e^{- \delta \mathcal{N}}$. Using
\begin{align}
\label{eq: commutator regularized number operator and Pauli-Fierz Hamiltonian}
\left[ \ND , H_N \right]
&= 2 \sum_{j=1}^N    \left[ \ND , N^{-1/2} \vA(x_j) \right] 
\cdot i \nabla_j 
+ N^{-1}  \sum_{j=1}^N   \left[ \ND ,\vA(x_j) \right] \vA(x_j) 
\nonumber \\
&\quad 
+ N^{-1}  \sum_{j=1}^N  \vA(x_j) \left[ \ND ,\vA(x_j) \right] 
\end{align}
and the Cauchy-Schwarz inequality we estimate
\begin{align}
\frac{d}{dt} \norm{\ND^{1/2} \Psi_{N,t}}^2 
&= i \scp{\ND \Psi_{N,t}}{\left[H_N, \ND \right] \Psi_{N,t}}
\nonumber \\
&\leq C N^{-1/2} \sum_{j=1}^N \norm{\left[ \ND ,\vA(x_j) \right] \Psi_{N,t}}
\left( \norm{i \nabla_j \Psi_{N,t}} 
+ N^{-1/2}  \norm{\vA(x_j) \Psi_{N,t}} \right) .
\end{align}
By means of the canonical commutation relations and the shifting property of the number operator we get
\begin{align}
\label{eq: commutator regularized number operator and vector potential explicitely written}
\left[ \ND , \vA(x) \right] 
&=  \left[ \mathcal{N} \left( 1 - e^{- \delta} \right) - e^{- \delta} \right] e^{- \delta \mathcal{N}} a ( G_x)
+ 
\left[ \mathcal{N} \left( e^{- \delta} - 1 \right) + 1 \right] e^{- \delta \left( \mathcal{N} - 1 \right)}  a^* (G_x) .
\end{align}
Using $e^{- \delta \mathcal{N}} \leq 1$, $\mathcal{N}  ( 1 - e^{- \delta} ) e^{- \delta \mathcal{N} } \leq 1$, $\mathcal{N}  (  e^{- \delta} -1 ) e^{- \delta \left( \mathcal{N} -1 \right) } \big|_{\mathcal{N} \geq 1}
\leq 3$ and \eqref{eq: bounds for the annihilation and creation operator} we obtain
\begin{align}
\label{eq: commutator regularized number operator with vector potential}
\norm{\left[ \ND , \vA(x) \right]  \Psi}
&\leq C \norm{\left( \abs{\cdot}^{-1/2} + \abs{\cdot}^{-1} \right) \mathcal{F}[\kappa]}_{L^2(\mathbb{R}^3)}
\norm{\left( H_f + 1 \right)^{1/2} \Psi} .
\end{align}
Hence, 
\begin{align}
\frac{d}{dt} \norm{\ND^{1/2} \Psi_{N,t}}^2 
&\leq 
C \norm{\left( \abs{\cdot}^{-1/2} + \abs{\cdot}^{-1} \right) \mathcal{F}[\kappa]}_{L^2(\mathbb{R}^3)}^2 \norm{\left( H_f + 1 \right)^{1/2} \Psi_{N,t}}^2 
\nonumber \\
&\quad 
+ \scp{\Psi_{N,t}}{ \sum_{j=1}^N  - \Delta_j \Psi_{N,t}}
\nonumber \\
&\leq 
C \left( 1 + \norm{\left( \abs{\cdot}^{-1/2} + \abs{\cdot}^{-1} \right) \mathcal{F}[\kappa]}_{L^2(\mathbb{R}^3)}^2 \right) \norm{\left( H_N^{(0)} + 1 \right)^{1/2} \Psi_{N,t}}^2
\end{align}
with $H_N^{(0)} = - \sum_{j=1}^N \Delta_j + H_f$.  Note that there exists a constant $C(N, \kappa)$ dependent on the number of particles and the  choice of $\kappa$ such that 
$\big\| \big( H_N^{(0)} + 1 \big)^{1/2}  \Psi \big\| \leq C(N, \kappa) \norm{\left( H_N + C \right)^{1/2} \Psi}$
holds for all $\Psi \in \mathcal{D} \big( H_N^{1/2} \big)$.
This fact follows from $\mathcal{D} \big( H_N^{(0)} \big) = \mathcal{D} \left( H_N \right) = \left( H^2(\mathbb{R}^{3N}, \mathbb{C}) \otimes \mathcal{F}_p \right) \cap \mathcal{D} \left( H_f \right)$ and the closed graph theorem \cite[Theorem 1.3 and Corollary 1.4]{hiroshima}. 
In that regard note that $\big( H_N^{(0)} \big)^{1/2} \big( \left( H_N + C \right)^{1/2} + i \big)^{-1}$ is a closed operator.
We consequently obtain
\begin{align}
\norm{\ND^{1/2} \Psi_{N,t}}^2 &\leq 
\norm{\ND^{1/2} \Psi_{N,0}}^2
+ C(N, \kappa) \int_0^t ds \,  \norm{\left( H_N + C \right)^{1/2} \Psi_{N,s}} .
\end{align}
By the spectral theorem and monotone convergence
\begin{align}
\lim_{\delta \rightarrow 0} \norm{\mathcal{N}_{\delta} \Psi}^2
&=    \lim_{\delta \rightarrow 0}  \int_0^{\infty}
\lambda^2 e^{- 2 \delta \lambda} \scp{\Psi}{d E(\lambda) \Psi} 
=   \int_0^{\infty}
\lambda^2 \scp{\Psi}{d E(\lambda) \Psi} 
= \norm{\mathcal{N} \Psi}^2   .
\end{align}
Together with Stone's theorem this shows the claim.
\end{proof}

\subsection{Computing the change of $\beta$ in time}

In order to estimate the emerging correlations between the particles and the photons during the evolution of the system we compute the change of $\beta \left( \Psi_{N,t}, \varphi_t,  \alpha_t \right)$ in time.
\begin{lemma}
\label{lemma:time derivative of beta-b}
Let $(\varphi_t, \alpha_t)$ and  $\Psi_{N,t}$ be the solutions of \eqref{eq:Hatree-Maxwell} and \eqref{eq:Schroedinger equation microscopic} from Theorem \eqref{theorem: Pauli main theorem} and $\beta^b$ be defined as in Definition \ref{definition:functional}. Then 
\begin{align}
\label{eq:time derivative of beta-b}
\begin{split}
&\beta^b \left( \Psi_{N,t}, \alpha_t \right) - \beta^b \left( \Psi_{N,0}, \alpha_0 \right)
\\
&\quad = 
2 \Re  \int_0^t ds \,
\sum_{\lambda = 1,2} \int d^3 k \, 
\Bigg[
\scp{\Psi_{N,s}}{i \sqrt{\frac{4 \pi^3}{\abs{k}}} \mathcal{F}[\kappa] \vep_{\lambda}(k) \overline{\mathcal{F}[\vj_s](k)} \left( \frac{a(k, \lambda)}{\sqrt{N}} - \alpha_s(k, \lambda) \right) \Psi_{N,s}}
\\
&\qquad + 
2 \scp{\Psi_{N,s}}{i \frac{\mathcal{F}[\kappa](k)}{\sqrt{2 \abs{k}}} \vep_{\lambda}(k) e^{i k x_1} \left( i \nabla_1 + N^{-1/2} \vA(x_1) \right) \left( \frac{a(k, \lambda)}{\sqrt{N}} - \alpha_s(k, \lambda) \right) \Psi_{N,s}}
\Bigg] .
\end{split}
\end{align}
\end{lemma}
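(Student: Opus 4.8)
The plan is to differentiate $\beta^b(\Psi_{N,s},\alpha_s)$ in $s$ and then integrate the result back from $0$ to $t$. Abbreviating $A_s(k,\lambda) \coloneqq \frac{a(k,\lambda)}{\sqrt{N}} - \alpha_s(k,\lambda)$, so that $\beta^b(\Psi_{N,s},\alpha_s) = \sum_{\lambda}\int d^3k\,\norm{A_s(k,\lambda)\Psi_{N,s}}^2$, the derivative splits into a contribution from the many-body evolution $i\partial_s\Psi_{N,s}=H_N\Psi_{N,s}$ and one from the classical evolution of $\alpha_s$. Using the self-adjointness of $H_N$ to move it onto a single factor of the inner product, the former assembles into $i\sum_\lambda\int d^3k\,\scp{\Psi_{N,s}}{[H_N,A_s^*A_s]\Psi_{N,s}}$, while the latter yields $\sum_\lambda\int d^3k\,\scp{\Psi_{N,s}}{(\dot A_s^*A_s+A_s^*\dot A_s)\Psi_{N,s}}$ with $\dot A_s=-\partial_s\alpha_s$ determined by the second line of \eqref{eq:Hatree-Maxwell}.

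Next I would evaluate the commutators. Since $\alpha_s(k,\lambda)\in\mathbb{C}$ commutes with $H_N$, one has $[H_N,A_s]=N^{-1/2}[H_N,a(k,\lambda)]$. The pair potential commutes with the field operators, the free field energy gives $[H_f,a(k,\lambda)]=-\abs{k}a(k,\lambda)$, and the decisive input is the commutator with the minimal-coupling term $\sum_j(-i\nabla_j-N^{-1/2}\vA(x_j))^2$: writing $D_j \coloneqq -i\nabla_j-N^{-1/2}\vA(x_j)$, one finds
\[
 [D_j^2,a(k,\lambda)] = 2\,N^{-1/2}\frac{\mathcal{F}[\kappa](k)}{\sqrt{2\abs{k}}}\,e^{-ikx_j}\,\vep_\lambda(k)\cdot D_j ,
\]
where the Coulomb gauge condition $\vep_\lambda(k)\cdot k=0$ is exactly what removes the longitudinal correction $[D_j,e^{-ikx_j}]\propto k$, and the factor $2$ comes from the two equal terms $D_j[D_j,a]$ and $[D_j,a]D_j$. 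The corresponding identity for $a^*(k,\lambda)$ follows by taking adjoints, with $-\abs{k}$ replaced by $+\abs{k}$ and $e^{-ikx_j}$ by $e^{+ikx_j}$.

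The heart of the computation is then a cancellation. Substituting the commutators and writing $N^{-1/2}a=A_s+\alpha_s$, the $H_f$-pieces collect into $i\abs{k}(\overline{\alpha_s} A_s-\alpha_s A_s^*)$, which cancels identically against the free-rotation part $-i\abs{k}\alpha_s$ of $\partial_s\alpha_s$ prescribed by $i\partial_s\alpha_s=\abs{k}\alpha_s-\sqrt{4\pi^3/\abs{k}}\,\mathcal{F}[\kappa]\vep_\lambda\cdot\mathcal{F}[\vj_s]$. What survives from the field equation is the current source, producing the first term of \eqref{eq:time derivative of beta-b}; what survives from the commutator is the transverse kinetic piece, and since $\Psi_{N,s}$ is symmetric in the particle labels I would replace $\sum_{j=1}^N$ by $N$ times its $j=1$ value, producing the second term (the internal factor $2$ being the one found above). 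Finally, using once more the self-adjointness of $H_N$ together with the reality of $\mathcal{F}[\kappa]$, the annihilation-side and creation-side contributions of each term are complex conjugates of one another and combine into $2\,\Re$; integrating in $s$ from $0$ to $t$ then gives \eqref{eq:time derivative of beta-b}.

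The real difficulty is not the algebra but its justification, since the commutators above are unbounded operators. I would carry out the computation on the dense domain $\mathcal{D}(H_N)\cap\mathcal{D}(\mathcal{N}^{1/2})$, which by Lemma \ref{lemma: bound number operator} is invariant under $e^{-iH_Ns}$ and on which $\beta^b(\Psi_{N,s},\alpha_s)$ is finite; this invariance is precisely what makes the time derivative meaningful. To interchange $\frac{d}{ds}$ with $\sum_\lambda\int d^3k$ and to manipulate the unbounded expressions rigorously, I would first replace the number operator by the bounded regularization $\ND=\mathcal{N}e^{-\delta\mathcal{N}}$ used in the proof of Lemma \ref{lemma: bound number operator}, differentiate the regularized functional where every object is bounded, and then remove the cutoff as $\delta\to0$ by monotone and dominated convergence together with the a priori bounds. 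Convergence and integrability of the $k$-integrals follow from Assumption \ref{assumption:potential and charge distribution} on $\kappa$ and from $\varphi_s\in H^2$, $\alpha_s\in\mathfrak{h}_{3/2}$ via the standard estimates \eqref{eq: bounds for the annihilation and creation operator}.
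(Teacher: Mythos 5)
Your proposal is correct and follows essentially the same route as the paper's proof: the same commutator algebra (the Coulomb-gauge identity $\vep_\lambda(k)\cdot k=0$ killing the longitudinal correction, and the cancellation of the free $\abs{k}$-rotation of $\alpha_s$ against the $H_f$-commutator), the same regularization $\mathcal{N}_\delta=\mathcal{N}e^{-\delta\mathcal{N}}$ removed by monotone convergence, and the same reliance on the invariance of $\mathcal{D}(H_N)\cap\mathcal{D}(\mathcal{N}^{1/2})$ from Lemma \ref{lemma: bound number operator}. The only difference is presentational: the paper first conjugates by the Weyl operator and computes $iN^{-1}\scp{\xi_{N,t}}{[\mathcal{G}(t),\mathcal{N}_\delta]\xi_{N,t}}$ for the fluctuation vector $\xi_{N,t}=W^{-1}(\sqrt{N}\alpha_t)\Psi_{N,t}$ (using the Ginibre--Velo differentiability of $t\mapsto W^{-1}(\sqrt{N}\alpha_t)$), whereas you differentiate $\sum_\lambda\int d^3k\,\|(a(k,\lambda)/\sqrt{N}-\alpha_s(k,\lambda))\Psi_{N,s}\|^2$ directly; by \eqref{eq:relation beta-b and number operator} these are unitarily equivalent computations.
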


\begin{proof}[Proof of Lemma \ref{lemma:time derivative of beta-b}]
Let
\begin{align}
\beta_{\delta}^b(t) = N^{-1} \scp{W^{-1}(\sqrt{N} \alpha_t)\Psi_{N,t}}{\mathcal{N}_{\delta} W^{-1}(\sqrt{N} \alpha_t)\Psi_{N,t}}
\end{align}
with $\mathcal{N}_{\delta}$ being defined as in the proof of Lemma \ref{lemma: bound number operator}. Using that $W^{-1}(\sqrt{N} \alpha_t)$ is strongly differentiable in $t$ from $\mathcal{D} \left( \mathcal{N}^{1/2} \right)$ to $\mathcal{H}^{(N)}$ with (see \cite[Lemma 3.1]{GV1979})
\begin{align}
\frac{d}{dt} W^{-1}(\sqrt{N} \alpha_t)
&= \left( N^{1/2} \left( a \left( \dot{\alpha}_t \right) - a^* \left( \dot{\alpha}_t \right) -  N i \, \Im \scp{\alpha_t}{\dot{\alpha}_t} \right) \right) W^{-1}(\sqrt{N} \alpha_t)
\end{align}
and 
\begin{align}
\begin{split}
\label{eq:shifting property Weyl operators}
W^{-1} \left( f \right) a(k,\lambda) W \left( f \right) &= a(k,\lambda) + f(k, \lambda) ,
\\
W^{-1} \left( f \right) a^*(k,\lambda) W \left( f \right) &= a(k,\lambda) + \overline{f(k, \lambda)}
\end{split}
\end{align}
we obtain that the time derivative of the fluctuation vector $\xi_{N,t} = W^{-1}(\sqrt{N} \alpha_t) \Psi_{N,t}$ is given by 
$\frac{d}{dt} \xi_{N,t} = - i \mathcal{G}(t) \xi_{N,t}$ with
\begin{align}
\mathcal{G}(t) &= N^{-1} \sum_{1\leq j<k\leq N} v(x_j-x_k) 
+ H_f + N \big\| \abs{\cdot}^{1/2 }\alpha_t \big\|_{\mathfrak{h}}^2 + N \, \Im \scp{\alpha_t}{\dot{\alpha}_t}_{\mathfrak{h}} 
\nonumber \\
&\quad + N^{1/2} \big( a \left( \abs{\cdot} \alpha_t - i \dot{\alpha}_t \right) + a^* \left( \abs{\cdot} \alpha_t - i \dot{\alpha}_t \right) \big)
\nonumber \\
&\quad + \sum_{j=1}^N  \left( - i \nabla_j - N^{-1/2} \vA(x_j)  - \vAc(x_j,t) \right)^2 .
\end{align}
In analogy to \eqref{eq: commutator regularized number operator and vector potential explicitely written} and the subsequent discussion one shows
\begin{align}
\lim_{\delta \rightarrow 0} \scp{\xi_{N,t}}{\big( \left[ a(f), \mathcal{N}_{\delta} \right] - a(f) \big) \xi_{N,t}} = 0
\end{align}
and
\begin{align}
\lim_{\delta \rightarrow 0} 
\norm{\Big( \left[ \vA(x_j) , \mathcal{N}_{\delta} \right] - \big(  a \left( G_{x_j} \right) - a^* \left( G_{x_j} \right) \big)  \Big) \xi_{N,t}} = 0 
\end{align}
for $f \in \mathfrak{h}$ and $\xi_{N,t} \in \mathcal{D} \left( H_N \right) \cap \mathcal{D} \big( \mathcal{N}^{1/2} \big)$.
Together with 
\begin{align}
\frac{d}{dt} \beta_{\delta}^b(t) 
= i N^{-1} \scp{\xi_{N,t}}{\left[ \mathcal{G}(t) , \mathcal{N}_{\delta} \right] \xi_{N,t}}
\end{align} 
this leads to
\begin{align}
&\lim_{\delta \rightarrow 0} \frac{d}{dt} \beta_{\delta}^b(t) 
\nonumber \\
&\quad = 2  N^{-3/2} \Im \scp{\xi_{N,t}}{\sum_{j=1}^N \left( a \left( G_{x_j}  \right) - a^* \left( G_{x_j} \right) \right)
\left( - i \nabla_j - N^{-1/2} \vA(x_j)  - \vAc(x_j,t) \right) \xi_{N,t}} 
\nonumber \\
&\qquad +  i N^{-1/2} \scp{\xi_{N,t}}{\big( a \left( \abs{\cdot} \alpha_t - i \dot{\alpha}_t \right) - a^* \left( \abs{\cdot} \alpha_t - i \dot{\alpha}_t \right) \big) \xi_{N,t}} .
\end{align}
The claim then follows by Duhamel's formula, monotone convergence and straightforward manipulations using \eqref{eq:shifting property Weyl operators}, $\left[ \nabla_j , \vA(x_j) \right] = 0$ and $\left[ \nabla_j , G_{x_j}(k, \lambda) \right] = 0$.
\end{proof}

From \cite[Section 6.2 and Section 6.4]{LP2020} and Duhamel's formula we immediately obtain the following.

\begin{lemma}
\label{lemma:time derivative of beta-a}
Let $(\varphi_t, \alpha_t)$ and  $\Psi_{N,t}$ be the solutions of \eqref{eq:Hatree-Maxwell} and \eqref{eq:Schroedinger equation microscopic} from Theorem \eqref{theorem: Pauli main theorem} and $\beta^a$, $\beta^c$ be defined as in Definition \ref{definition:functional}. Then, 
\begin{align}
\label{eq:time derivative of beta-a}
\begin{split}
&\beta^a \left( \Psi_{N,t}, \varphi_t \right)
- \beta^a \left( \Psi_{N,0}, \varphi_0 \right)
\\
&\quad = - 2  \int_0^t ds \,
\Bigg[
2 \Re  \scp{\Psi_{N,s}}{p_1^{\varphi_s} \left( N^{-1/2} \vA(x_1) - \vAc(x_1,s) \right) \cdot \nabla_1 q_1^{\varphi_s} \Psi_{N,s}}
\\
&\qquad \qquad  \qquad 
+ \Im \scp{\Psi_{N,s}}{p_1^{\varphi_s} \left( N^{-1} \vA^2(x_1) - \vAc^2(x_1,s) \right) q_1^{\varphi_s} \Psi_{N,s}}
\\
&\qquad \qquad  \qquad 
+ \Im \scp{\Psi_{N,s}}{p_1^{\varphi_s} \left( (N-1) N^{-1} v(x_1 - x_2) - \left( v * \abs{\varphi_t}^2 \right)(x_1) \right) q_1^{\varphi_s} \Psi_{N,s}}
\Bigg]  .
\end{split}
\end{align}
Moreover,
\begin{align}
\beta^c \left( \Psi_{N,t}, \varphi_t, \alpha_t \right)
&=
\beta^c \left( \Psi_{N,0}, \varphi_0, \alpha_0 \right)
\end{align}
due to energy conservation.
\end{lemma}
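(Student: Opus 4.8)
The plan is to treat the two assertions separately. The guiding observation is that neither $\beta^a$ nor $\beta^c$ contains the number operator $\mathcal{N}$ — the former involves only the bounded projection $q_1^{\varphi}$, the latter only the energy — so that, in contrast to $\beta^b$, the computations of \cite[Section 6.2 and Section 6.4]{LP2020} transfer essentially verbatim; the only point needing care is that all manipulations take place on the invariant domain $\mathcal{D}(H_N)$, which is guaranteed by Stone's theorem.

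For the first identity I would compute $\frac{d}{ds}\beta^a(\Psi_{N,s},\varphi_s)$ by the product rule. Writing $\partial_s\Psi_{N,s}=-iH_N\Psi_{N,s}$ and using self-adjointness of $H_N$, the two contributions from differentiating $\Psi_{N,s}$ combine into $i\scp{\Psi_{N,s}}{[H_N,q_1^{\varphi_s}]\Psi_{N,s}}$, while differentiating the rank-one projector, together with $\partial_s\varphi_s=-ih_1^{\varphi_s}\varphi_s$ for the one-particle Maxwell--Schr\"odinger generator $h_1^{\varphi_s}=(-i\nabla_1-\vAc(x_1,s))^2+(v*\abs{\varphi_s}^2)(x_1)$, produces $-i\scp{\Psi_{N,s}}{[h_1^{\varphi_s},q_1^{\varphi_s}]\Psi_{N,s}}$. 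Only the parts of $H_N$ acting on the first coordinate fail to commute with $q_1^{\varphi_s}$; after using bosonic exchange symmetry to replace $N^{-1}\sum_{k\geq 2}v(x_1-x_k)$ by $(N-1)N^{-1}v(x_1-x_2)$, the free Laplacian $-\Delta_1$ cancels between the two commutators, and (in the Coulomb gauge, where $[\nabla_1,\vA(x_1)]=0$) the remaining difference splits into the cross term $2i\,(N^{-1/2}\vA(x_1)-\vAc(x_1,s))\cdot\nabla_1$, the quadratic field term $N^{-1}\vA^2(x_1)-\vAc^2(x_1,s)$, and the interaction term $(N-1)N^{-1}v(x_1-x_2)-(v*\abs{\varphi_s}^2)(x_1)$. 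Inserting $1=p_1^{\varphi_s}+q_1^{\varphi_s}$ and collecting the surviving off-diagonal $p_1^{\varphi_s}(\cdots)q_1^{\varphi_s}$ pieces as real/imaginary parts — exactly as in \cite[Section 6.2]{LP2020} — reproduces the integrand of \eqref{eq:time derivative of beta-a}; integrating from $0$ to $t$ by the fundamental theorem of calculus (Duhamel), which is legitimate since $q_1^{\varphi_s}$ is bounded and $\Psi_{N,s}\in\mathcal{D}(H_N)$ for all $s$, gives the stated identity.

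For the second identity I would argue purely by conservation. Since $\mathcal{E}_M[\varphi_s,\alpha_s]$ is a scalar, the operator $N^{-1}H_N-\mathcal{E}_M[\varphi_s,\alpha_s]$ commutes with the unitary group $e^{-iH_Ns}$, so $\norm{(N^{-1}H_N-\mathcal{E}_M[\varphi_s,\alpha_s])\Psi_{N,s}}=\norm{(N^{-1}H_N-\mathcal{E}_M[\varphi_s,\alpha_s])\Psi_{N,0}}$. Because the Maxwell--Schr\"odinger energy is a conserved quantity of the flow \eqref{eq:Hatree-Maxwell}, one has $\mathcal{E}_M[\varphi_s,\alpha_s]=\mathcal{E}_M[\varphi_0,\alpha_0]$, whence the right-hand side equals $\beta^c(\Psi_{N,0},\varphi_0,\alpha_0)^{1/2}$; squaring yields $\beta^c(\Psi_{N,s},\varphi_s,\alpha_s)=\beta^c(\Psi_{N,0},\varphi_0,\alpha_0)$.

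The main obstacle is the bookkeeping in the first part: one has to verify that all mean-field contributions (the $-\Delta_1$ term together with the diagonal field and interaction pieces) cancel and that precisely the three listed off-diagonal expressions remain, while ensuring that each unbounded operator ($\nabla_1$, $\vA(x_1)$, $H_N$) acts only on vectors in its domain. This is exactly where the structural complication present for $\beta^b$ is absent here, since $q_1^{\varphi_s}$ is bounded and the membership $\Psi_{N,s}\in\mathcal{D}(H_N)$ follows from Stone's theorem alone — the additional domain control of Lemma \ref{lemma: bound number operator} being required only for $\beta^b$. Modulo this routine bookkeeping, both claims reduce to the corresponding computations of \cite[Section 6.2 and Section 6.4]{LP2020}.
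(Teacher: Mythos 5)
Your argument is correct and takes essentially the same route as the paper, whose proof of this lemma is just the citation of the commutator computation in \cite[Sections 6.2 and 6.4]{LP2020} combined with Duhamel's formula; your expanded version (cancellation of $-\Delta_1$ in Coulomb gauge, reduction to the off-diagonal $p_1^{\varphi_s}(\cdots)q_1^{\varphi_s}$ terms via bosonic symmetry, and conservation of both $H_N$ and $\mathcal{E}_M$ for $\beta^c$) is precisely what that citation stands for, with the domain issues indeed settled by $\Psi_{N,s}\in\mathcal{D}(H_N)$ alone. (Incidentally, the $\varphi_t$ in the last line of \eqref{eq:time derivative of beta-a} should read $\varphi_s$, as you have it.)
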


\subsection{Controlling the growth of $\beta$ in time}

In this section, we classify the growth of $\beta\left( \Psi_{N,t}, \varphi_t, \alpha_t \right)$ in time. The first two inequalities of Theorem \ref{theorem: Pauli main theorem} then follow from \eqref{eq:relation beta-a and reduced density}, \eqref{eq:relation beta-b and number operator} and the statement below. By similar estimates as in \cite[Chapter 7]{LP2020} one obtains \eqref{eq: main theorem 3} and \eqref{eq: main theorem 4}.

\begin{lemma}
\label{lemma:estimate time derivative of beta}
Let $(\varphi_t, \alpha_t)$ and  $\Psi_{N,t}$ be the solutions of \eqref{eq:Hatree-Maxwell} and \eqref{eq:Schroedinger equation microscopic} from Theorem \ref{theorem: Pauli main theorem}.
Then, there exists a monotone increasing function $C(s)$ of the norms $\norm{\varphi_s}_{H^2(\mathbb{R}^3)}$, $\norm{\abs{\cdot}^{1/2} \alpha_s}_{\mathfrak{h}}$, $\norm{v}_{L^2 + L^{\infty}(\mathbb{R}^3)}$ and 
$\norm{\left( \abs{\cdot}^{-1/2} + \abs{\cdot}^{-1} \right) \mathcal{F}[\kappa]}_{L^2(\mathbb{R}^3)}$
such that 
\begin{align}
\label{eq:estimate time derivative of beta}
\beta \left( \Psi_{N,t}, \varphi_t , \alpha_t \right) - \beta \left( \Psi_{N,0}, \varphi_0 , \alpha_0 \right)
&\leq \int_0^t ds \, C(s) \left( \beta \left( \Psi_{N,s}, \varphi_s , \alpha_s \right) + N^{-1} \right) ,
\\
\label{eq:estimate beta}
\beta \left( \Psi_{N,t}, \varphi_t , \alpha_t \right) &\leq \left(  \beta \left( \Psi_{N,0}, \varphi_0 , \alpha_0 \right) + N^{-1} \right)
e^{\int_0^t ds \, C(s)}
\end{align}
holds for any $t \geq 0$.
\end{lemma}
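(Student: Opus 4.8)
The plan is to reduce the statement to a Gr\"onwall argument: once the integral inequality \eqref{eq:estimate time derivative of beta} is in hand, \eqref{eq:estimate beta} follows by applying Gr\"onwall's lemma to the map $t \mapsto \beta(\Psi_{N,t},\varphi_t,\alpha_t) + N^{-1}$. Since $\beta = \beta^a + \beta^b + \beta^c$ and $\beta^c$ is conserved by Lemma \ref{lemma:time derivative of beta-a}, it suffices to bound the integrands in the formulas for $\beta^a(\Psi_{N,t},\varphi_t) - \beta^a(\Psi_{N,0},\varphi_0)$ and $\beta^b(\Psi_{N,t},\alpha_t) - \beta^b(\Psi_{N,0},\alpha_0)$ supplied by Lemmas \ref{lemma:time derivative of beta-a} and \ref{lemma:time derivative of beta-b} by $C(s)\big(\beta(\Psi_{N,s},\varphi_s,\alpha_s) + N^{-1}\big)$, with $C(s)$ monotone in the four listed norms.

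For the three terms of \eqref{eq:time derivative of beta-a} I would follow \cite[Section 6.2]{LP2020}. The guiding principle is that each summand carries a factor $q_1^{\varphi_s}$, so Cauchy-Schwarz in $\mathcal{H}^{(N)}$ always extracts $\|q_1^{\varphi_s}\Psi_{N,s}\| = \sqrt{\beta^a(\Psi_{N,s},\varphi_s)}$ and the task is to bound the complementary factor uniformly. For the term coupling $N^{-1/2}\vA(x_1) - \vAc(x_1,s)$ to $\nabla_1$ one splits the vector-potential difference into a fluctuation field plus a classical remainder, controls $\nabla_1$ through $\|\varphi_s\|_{H^2(\mathbb{R}^3)}$ together with the creation/annihilation bounds \eqref{eq: bounds for the annihilation and creation operator}, and absorbs the resulting $\sqrt{\beta^a}\sqrt{\beta^b}$ products via $\sqrt{xy}\leq \tfrac12(x+y) \leq \tfrac12\beta$. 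The self-interaction term uses $\|v\|_{L^2+L^\infty(\mathbb{R}^3)}$, and the replacement $(N-1)N^{-1} = 1 - N^{-1}$ is one source of the $N^{-1}$ correction.

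The core of the estimate, and the place where the present work genuinely departs from \cite{LP2020}, lies in the two terms of \eqref{eq:time derivative of beta-b}. The decisive structural observation is a near-cancellation between the first term --- which features the \emph{classical} current $\mathcal{F}[\vj_s]$ defining the mode equation for $\alpha_s$ in \eqref{eq:Hatree-Maxwell} --- and the $p_1^{\varphi_s}$-projected part of the second term, where the operator $e^{ikx_1}\big(i\nabla_1 + N^{-1/2}\vA(x_1)\big)$, sandwiched in the condensate wave function and with $N^{-1/2}\vA(x_1)$ replaced by $\vAc(x_1,s)$, reconstructs precisely $\mathcal{F}[\vj_s]$ and cancels it. Inserting $1 = p_1^{\varphi_s} + q_1^{\varphi_s}$, the $q_1^{\varphi_s}$ contribution is estimated by Cauchy-Schwarz as $\sqrt{\beta^a}$ times a factor whose $(k,\lambda)$-integral is $\sqrt{\beta^b(\Psi_{N,s},\alpha_s)}$; the leftover field difference $N^{-1/2}\vA(x_1) - \vAc(x_1,s)$ regenerates $\sqrt{\beta^b}$, while the commutator $[a(k,\lambda),\vA(x_1)]$ and the explicit $N^{-1/2}$ prefactors yield the $N^{-1}$ remainder.

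The main obstacle --- and the reason the $|k|$-weight available in \cite{LP2020} is unavailable here --- is the infrared region. When Cauchy-Schwarz in $(k,\lambda)$ is used to pull out $\sqrt{\beta^b}$, the complementary factor contains $\sqrt{\tfrac{4\pi^3}{|k|}}\,\mathcal{F}[\kappa]$ and $|k|^{-1/2}\mathcal{F}[\kappa]$ arising from the mode equation and from $\vA(x_1)$; because $\beta^b$ is built from $\mathcal{N}$ rather than $H_f$, it carries no compensating factor $|k|$, so the integrability of these $(k,\lambda)$-integrals rests entirely on $\big(|\cdot|^{-1} + |\cdot|^{1/2}\big)\mathcal{F}[\kappa] \in L^2(\mathbb{R}^3)$ from Assumption \ref{assumption:potential and charge distribution} --- in particular on $|\cdot|^{-1/2}\mathcal{F}[\kappa] \in L^2$, which tames the small-$k$ contribution that the energy could not control. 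Accordingly $\mathcal{F}[\vj_s]$ must be estimated in the weighted norms governed by $\|\varphi_s\|_{H^2(\mathbb{R}^3)}$ and $\||\cdot|^{1/2}\alpha_s\|_{\mathfrak{h}}$. Assembling these bounds produces a monotone $C(s)$ of the required form, and Gr\"onwall's inequality then delivers \eqref{eq:estimate beta}.
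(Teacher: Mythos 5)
Your proposal is correct and follows essentially the same route as the paper: both reduce the term-by-term estimates to those of \cite{LP2020}, with the single genuinely new point being that the weight $\abs{k}^{-1/2}$ must now be attached to $\mathcal{F}[\kappa]$ (using $\abs{\cdot}^{-1/2}\mathcal{F}[\kappa]\in L^2$, a consequence of Assumption \ref{assumption:potential and charge distribution}) so that the unweighted fluctuation quantity $\beta^b$, rather than the $\abs{k}$-weighted $\widetilde{\beta}^b$, closes the Gr\"onwall argument. The only difference is presentational: you extract $\sqrt{\beta^b}$ by Cauchy--Schwarz directly in $(k,\lambda)$, whereas the paper packages the same step in position space via the auxiliary fields $\boldsymbol{F}^{\pm}$, the convolution kernel $\eta$ with $\mathcal{F}[\eta]=(2\pi)^{-3/2}\abs{\cdot}^{-1/2}\mathcal{F}[\kappa]$, and the identity $\int d^3y\,\|(N^{-1/2}\boldsymbol{\hat{F}}^{+}(y)-\boldsymbol{F}^{+}(y,t))\Psi_{N,t}\|^2=4\pi^3\beta^b(t)$ --- the two are equivalent by Plancherel.
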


\begin{proof}[Sketch of the proof of Lemma \ref{lemma:estimate time derivative of beta}]
Inequality \eqref{eq:estimate time derivative of beta} is proven analogously to \cite[Lemma 6.10]{LP2020}. The similarity becomes obvious if one defines the auxiliary fields
\begin{align}
\begin{split}
\vFp(x) &\coloneqq  \frac{i}{\sqrt{2}}  \sum_{\lambda=1,2} \int d^3k \,   \vep_{\lambda}(k) 
 e^{ikx} a(k,\lambda),  
\\
\vFm(x) &\coloneqq - \frac{i}{\sqrt{2}}  \sum_{\lambda=1,2} \int d^3k \,   \vep_{\lambda}(k) 
 e^{-ikx} a^*(k,\lambda),  
\\
\vFpc(x,t) &\coloneqq  \frac{i}{\sqrt{2}}  \sum_{\lambda=1,2} \int d^3k \,   
 \vep_{\lambda}(k)  e^{ikx} \alpha_t(k,\lambda),  
\\
\vFmc(x,t) &\coloneqq - \frac{i}{\sqrt{2}}  \sum_{\lambda=1,2} \int d^3k \,    \vep_{\lambda}(k)  e^{-ikx} \overline{\alpha_t(k,\lambda)} .
\end{split}
\end{align}
By means of the cutoff function $\eta$ with Fourier transform
\begin{align}
\mathcal{F}[\eta](k) = (2 \pi)^{- 3/2} \abs{k}^{-1/2} \mathcal{F}[\kappa](k)
\end{align}
we can write the quantum and classical vector potentials as
\begin{align}
\begin{split}
\vA(x) &= - i \big( \eta * \vFp \big)(x) + i \big( \eta * \vFm \big)(x) ,
\\
\vAc(x,t) &= - i \big( \eta * \vFpc \big)(x,t) + i \big( \eta * \vFmc \big)(x,t) .
\end{split}
\end{align}
These relations are the analogue of \cite[Lemma 6.1]{LP2020}. Thus if we replace $\vE^{\sharp}$, $\vEc^{\sharp}$ in the original estimates of \cite{LP2020} by $\vF^{\sharp}$, $\vFc^{\sharp}$ with $\sharp \in \{ - , + \}$ and use
\begin{align}
\int d^3y \, \norm{\left( N^{-1/2} \vFp(y) - \vFpc(y,t) \right) \Psi_{N,t}}_{\mathcal{H}^{(N)}}^2 = 4 \pi^3 \beta^b(t)
\end{align}
we obtain \eqref{eq:estimate time derivative of beta} by similar means. This implies \eqref{eq:estimate beta} because of Gr\"onwall's inequality.

\end{proof}

\appendix

\section{Properties of \eqref{eq:Hatree-Maxwell}}
\label{section:properties of the solutions of Hatree-Maxwell}

\begin{proof}[Proof of Corollary \ref{corollary:Maxwell-Schroedinger mode function}]
For the initial data of the corollary we have that $(\varphi_0,\mathbf{A}_0,\mathbf{E}_0) \in H^2 \times H^2 \times H^1$. The existence of a unique global solution $\left( \varphi, \alpha \right)$ with $\varphi_t \in H^2(\mathbb{R}^3)$ and $\left( \abs{\cdot}^{1/2} + \abs{\cdot}^{3/2} \right) \alpha_t \in \mathfrak{h}$ then follows from Proposition \ref{prop:1}. In order to see that $\alpha_t \in \mathfrak{h}$ we bound the integral version of \eqref{eq:Hatree-Maxwell} by
\begin{align}
\label{eq:estimate alpha-t}
\norm{\alpha_t}_{\mathfrak{h}}
&\leq \norm{\alpha_0}_{\mathfrak{h}}
+ \int_0^t ds \,
\left(
\norm{\abs{\cdot} \alpha_s}_{\mathfrak{h}}
+ C  \norm{ \abs{\cdot}^{-1/2} \mathcal{F}[\kappa] \vep \cdot\mathcal{F}[\vj_s] }_{\mathfrak{h}}
\right) .
\end{align}
Using H\"older's inequality and Young's inequality we get
\begin{align}
&\norm{ \abs{\cdot}^{-1/2} \mathcal{F}[\kappa] \vep \cdot\mathcal{F}[\vj_s] }_{\mathfrak{h}}
\nonumber \\
&\quad 
\leq C \norm{\abs{\cdot}^{-1/2} \mathcal{F}[\kappa]}_{L^2(\mathbb{R}^3)} 
\left(
\norm{\varphi_s \nabla \varphi_s}_{L^{1}(\mathbb{R}^3, \mathbb{C}^3)}
+
\norm{\abs{\varphi_s}^2 \kappa * \mathbf{A}_s}_{L^{1}(\mathbb{R}^3, \mathbb{C}^3)}
\right)
\nonumber \\
&\quad 
\leq C \norm{\abs{\cdot}^{-1/2} \mathcal{F}[\kappa]}_{L^2(\mathbb{R}^3)} \norm{\varphi_s}_{H^1(\mathbb{R}^3)}^2
\left(
1 +
\norm{\left( - \Delta \right)^{-1/4} \kappa}_{L^2(\mathbb{R}^3)}
\norm{\left( - \Delta \right)^{1/4} \mathbf{A}_s}_{L^2(\mathbb{R}^3)}
\right)
\nonumber \\
&\quad \leq 
C  \norm{\varphi_s}_{H^1(\mathbb{R}^3)}^2 \left( \norm{\mathbf{A}_s}_{H^{\frac{1}{2}}(\mathbb{R}^3, \mathbb{C}^3)}   + 1 \right) \left( \norm{\abs{\cdot}^{-1/2} \mathcal{F}[\kappa]}_{L^2(\mathbb{R}^3)}^2 + 1
\right) .
\end{align}
With the help of Proposition \ref{prop:1} we conclude that the right hand side of \eqref{eq:estimate alpha-t} is finite. This shows the claim.
\end{proof}

\section*{Acknowledgments}
N.L. would like to thank Peter Pickl for many fruitful discussions within the project \cite{LP2020}. 
\\
M.F. acknowledges support from ``Istituto Nazionale di Alta Matematica
(INdAM)'' through the ``Progetto Giovani GNFM 2020: Emergent Features in Quantum Bosonic
Theories and Semiclassical Analysis''.
N.L. gratefully acknowledges support from the Swiss National Science
Foundation through the NCCR SwissMap and funding from the European Union's
Horizon 2020 research and innovation programme under the Marie Sk\l
odowska-Curie grant agreement N\textsuperscript{o} 101024712.

{}

\end{document}